\newtheorem{thm}{Theorem}[section]%
\newtheorem{lem}{Lemma}[section]%
\newtheorem{prop}{Proposition}[section]%
\newtheorem{cor}{Corollary}[section]%
\theoremstyle{definition}
\newtheorem{defn}{Definition}[section]
\theoremstyle{remark}
\theoremstyle{plain}
\def\CC{{\mathbb C}}
\def\NN{{\mathbb N}}
\def\QQ{{\mathbb Q}}
\def\RR{{\mathbb R}}
\def\TT{{\mathbb T}}
\def\ZZ{{\mathbb Z}}
\def\scrD{{\mathcal D}}
\def\scrN{{\mathcal N}}
\def\scrS{{\mathcal S}}
\def\e{\mathrm{e}}
\def\i{\mathrm{i}}
\def\L{\operatorname{L{}}}
\def\Op{\operatorname{Op}}
\def\vol{\operatorname{vol}}
\newcommand\blfootnote[1]{%
  \begingroup
  \renewcommand\thefootnote{}\footnote{#1}%
  \addtocounter{footnote}{-1}%
  \endgroup
}
\title{On the Phase-Space Distribution of Bloch Eigenmodes for Periodic Point Scatterers}
\author{Jory Griffin}
\date{\today}
\begin{document}

\begin{abstract} Consider the 3-dimensional Laplacian with a potential described by point scatterers placed on the integer lattice. We prove that for Floquet-Bloch modes with fixed quasi-momentum satisfying a certain Diophantine condition, there is a subsequence of eigenvalues of positive density whose eigenfunctions exhibit equidistribution in position space and localisation in momentum space. This result complements the result of Uebersch\"{a}r and Kurlberg \cite{Ueberschaer_Kurlberg} who show momentum localisation for zero quasi-momentum in 2-dimensions and is the first result in this direction in 3-dimensions. 
\end{abstract}

\maketitle
\blfootnote{The research leading to these results has received funding from the European Research Council under the European Union's Seventh Framework Programme (FP/2007-2013) / ERC Grant Agreement n. 291147.}

\section{Introduction}
The phase space distribution of quantum eigenfunctions for large energies remains in general an unsolved problem - specifically, one would like to know whether the eigenfunctions of a given system exhibit equidistribution or some degree of localisation (or indeed both). We are motivated by the physical problem concerning propagation through a cubic crystal lattice of scatterers. It is well known that when considering a scattering problem in which the wavelength is much larger than the radius of the scatterer, we can replace the scattering potential with a Dirac $\delta$ point potential. This approach is perhaps most famously used in the Kronig-Penney model \cite{Kronig_Penney} which considers the one dimensional Schr\"{o}dinger equation with a Dirac comb potential. A thorough treatment of models of this type, as well as higher dimensional analogs can be found in \cite{solvable_models}. Periodic problems of this sort can be tackled with Floquet-Bloch theory which allows us to reduce a periodic problem in $\RR^d$ to a family of quasiperiodic problems on $\TT^d$ parametrised by their Bloch vector or quasimomentum $k \in \TT^d$. 

For zero quasimomentum the problem of limiting phase space distributions has been studied in two dimensions by Rudnick and Uebersch\"ar~\cite{Rudnick_Uebershaer}, and Uebersch\"ar and Kurlberg~\cite{Ueberschaer_Kurlberg,Ueberschaer_Kurlberg_QE}, who showed that almost all eigenfunctions equidistribute in position space for all tori, and that in momentum space almost all eigenfunctions \textit{either} equidistribute for square tori, or localise for tori with a diophantine ratio of side lengths. These results were partially generalised to three dimensions by Yesha~\cite{Yesha_EigenfunctionStatistics,Yesha_QE} who showed that for the cubic torus, \textit{all} eigenfunctions equidistribute in position space, and that almost all eigenfunctions equidistribute in phase space. These results are further complimented by Kurlberg and Rosenzweig \cite{Kurlberg_Rosenzweig} who show the existence of localisation in position representation in $2$ dimensions and momentum representation in both $2$ and $3$ dimensions.  In this paper we generalise the results on the cubic torus to include nonzero quasimomentum, which destroys the high eigenvalue multiplicity and consequently, the equidistribution observed by Yesha. The proof follows a similar blueprint, but relies heavily on a result concerning the convergence of the two-point correlation function for inhomogeneous quadratic forms \cite{Marklof_IQF2}.

\begin{figure}[h]
\centering
\subfloat[$\lambda_{100} \approx$ 100.03]{
  \includegraphics[width=60mm]{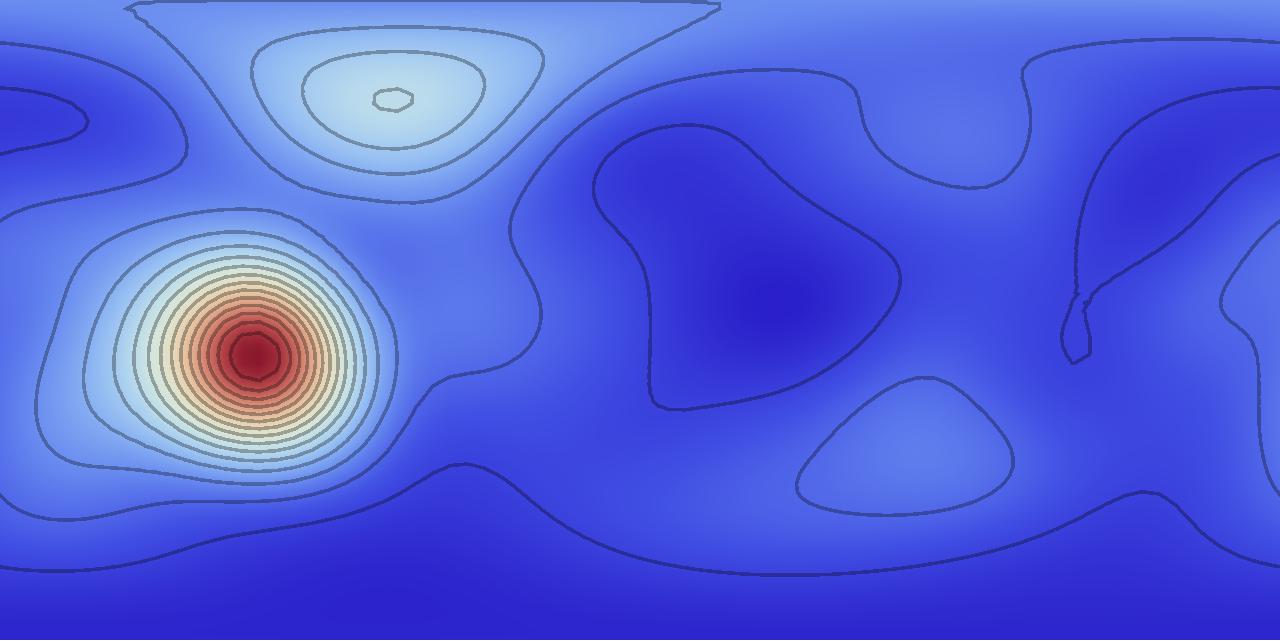}
}
\subfloat[$\lambda_{101} \approx$ 100.04]{
  \includegraphics[width=60mm]{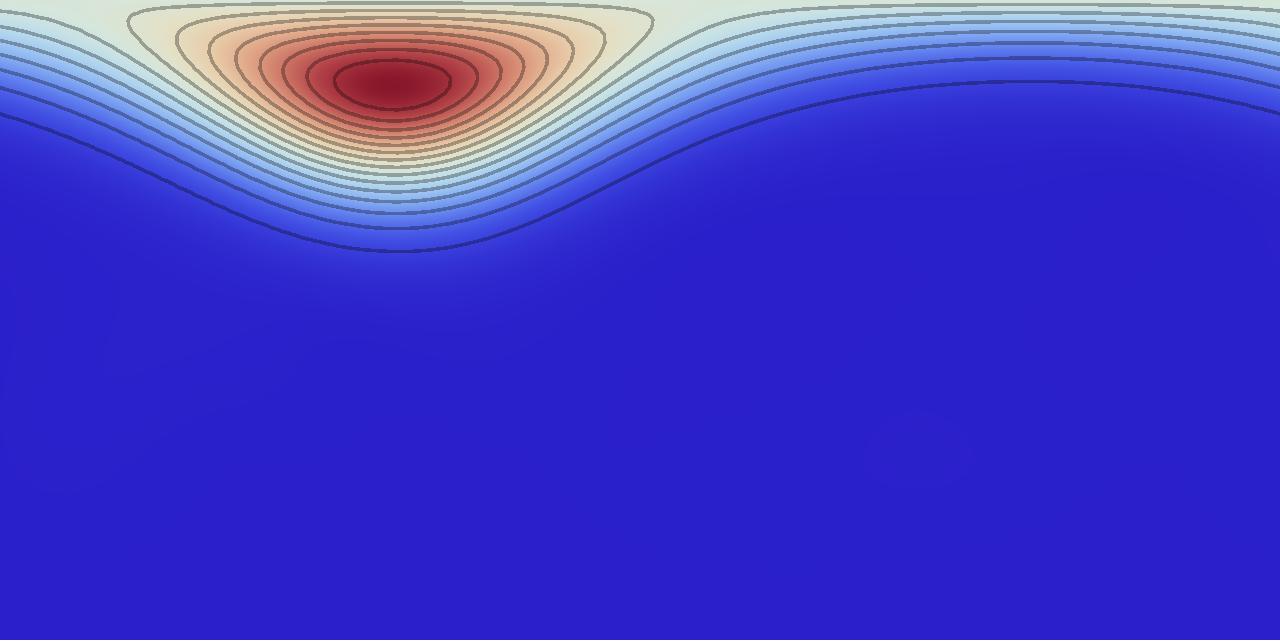}
}
\hspace{0mm}
\subfloat[$\lambda_{102} \approx$ 100.06]{
  \includegraphics[width=60mm]{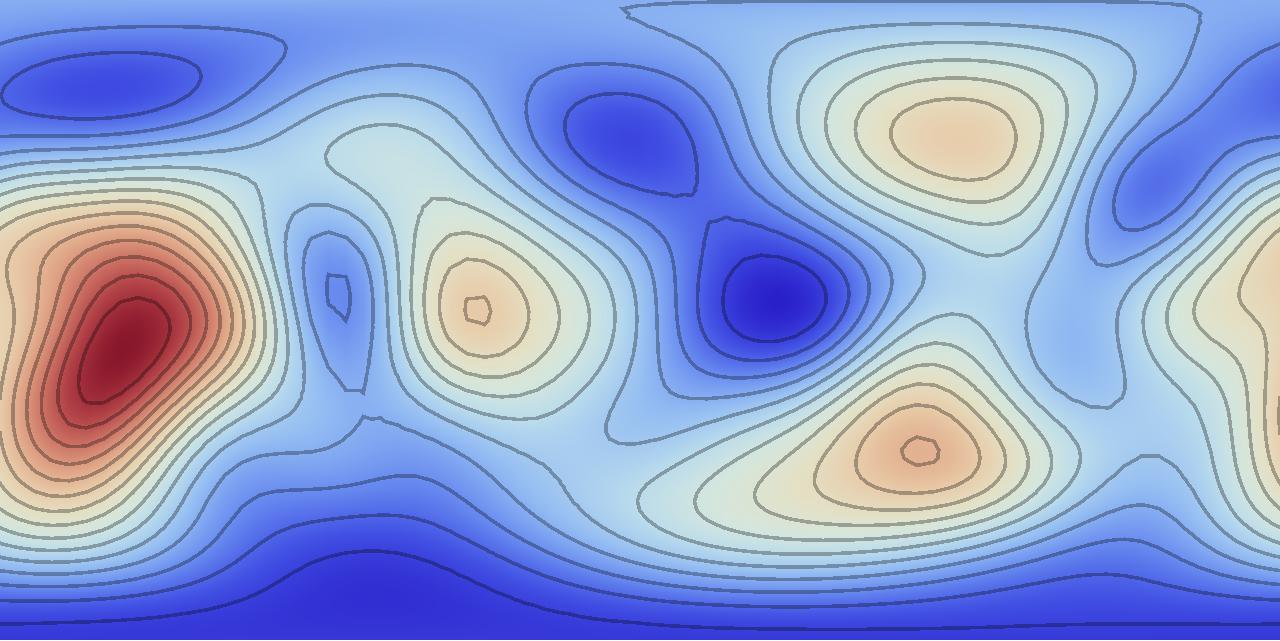}
}
\subfloat[$\lambda_{103} \approx$ 100.09]{
  \includegraphics[width=60mm]{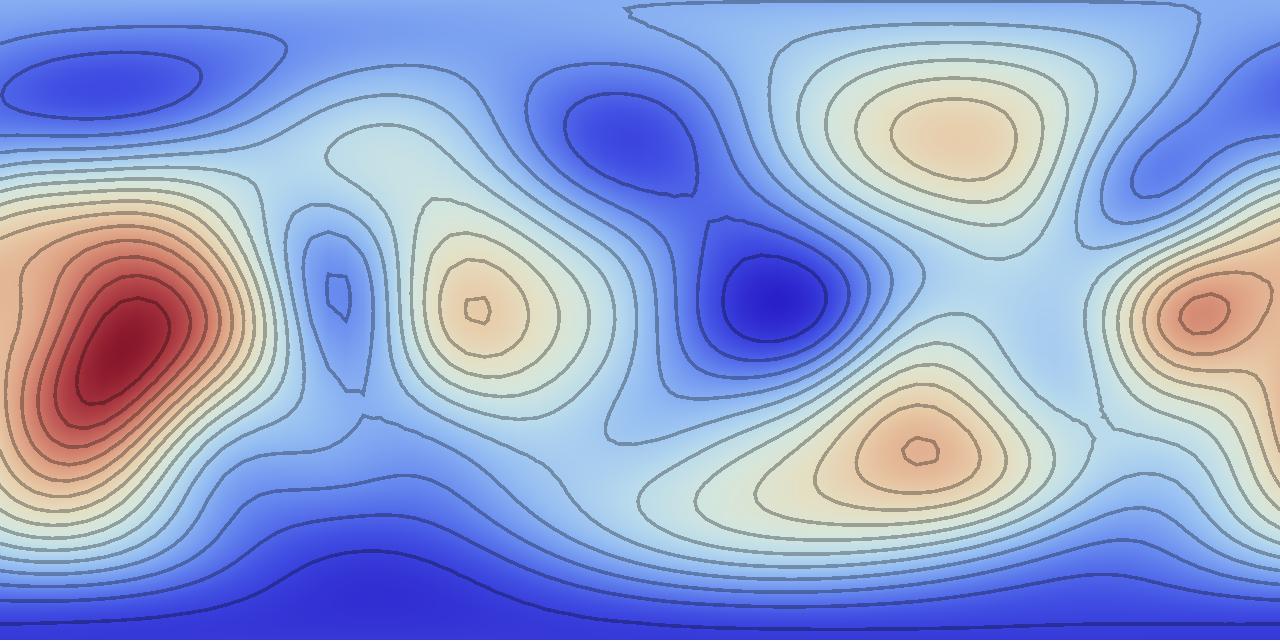}
}
\hspace{0mm}
\subfloat[$\lambda_{104} \approx$ 100.11]{
  \includegraphics[width=60mm]{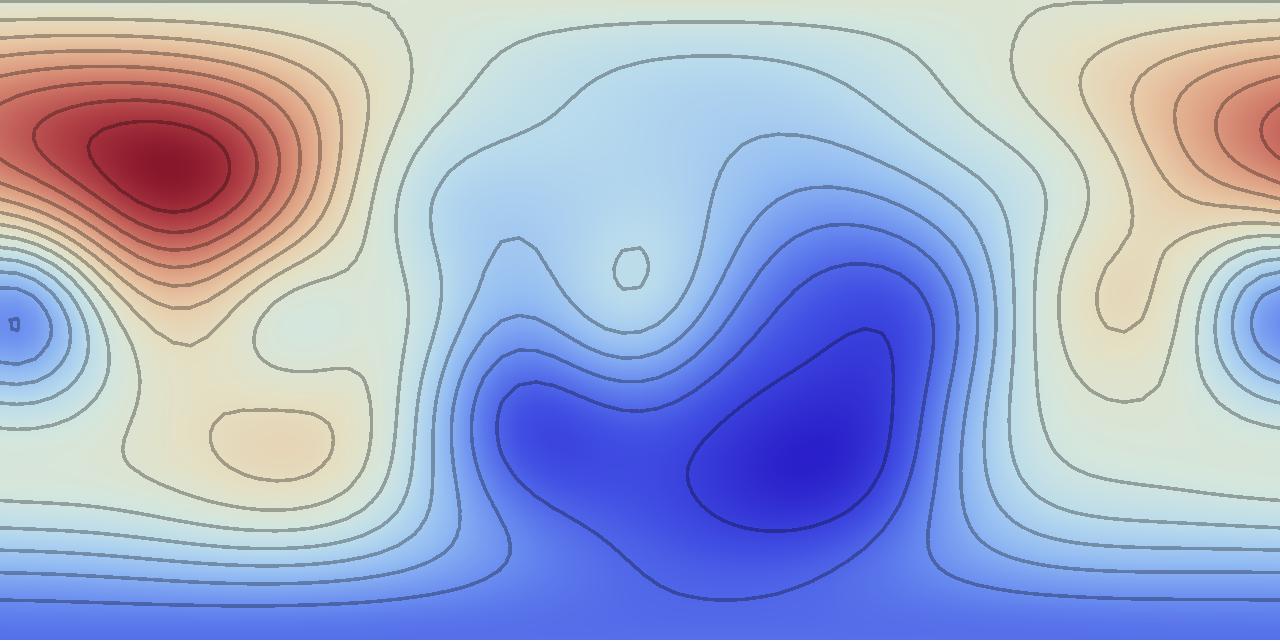}
}
\subfloat[$\lambda_{105} \approx$ 100.13]{
  \includegraphics[width=60mm]{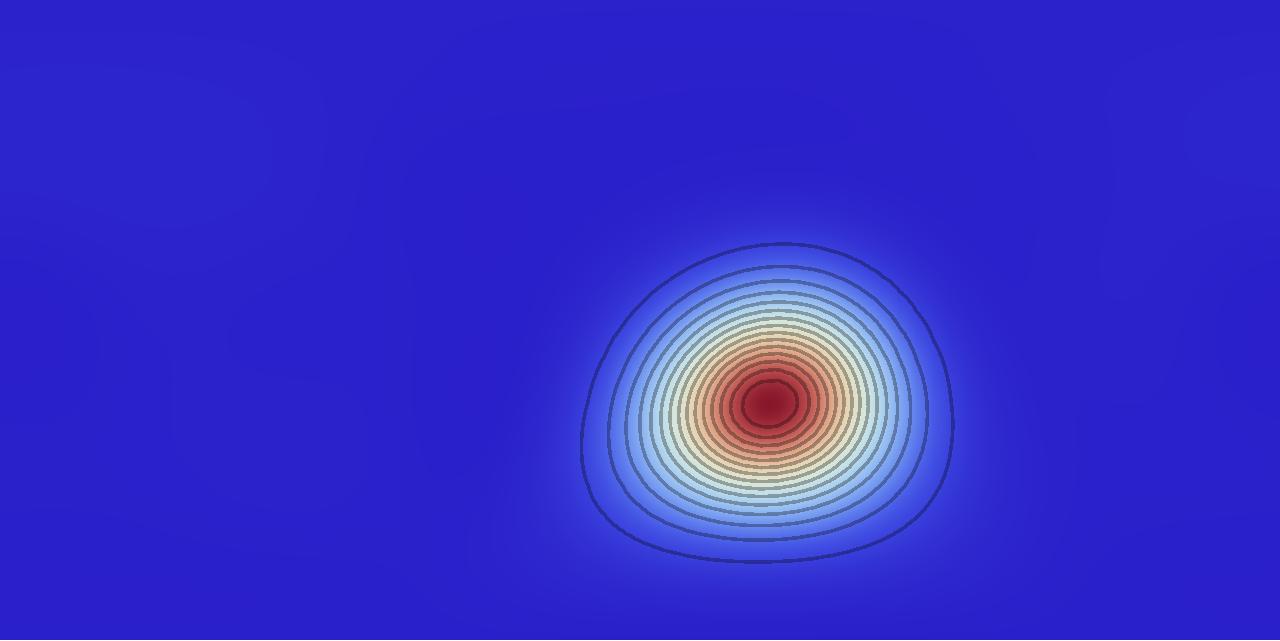}
}
\caption{Six consecutive eigenfunction density plots on the plane $(\theta,\phi)$ showing the distribution of momentum directions. We use fixed quasimomentum $k = (\tfrac{1}{\sqrt{2}}, \tfrac{1}{\sqrt{3}},\tfrac{1}{\sqrt{5}})$.}
\end{figure}

Problems of this type have been studied extensively in the Quantum Chaos literature since \v{S}eba~\cite{Seba} who considered a rectangular billiard with a point scatterer at some given point. The \v{S}eba billiard was constructed as an example of an intermediate system, meaning one that is classically integrable (the point scatterer affects only a zero measure set of trajectories) yet exhibits properties typical of chaotic systems \cite{Seba2,Shig1,Shig2,Shig3}. This is interesting in view of Shnirelman's theorem~\cite{Colin,Shnirelman,Zelditch} which states that for classically ergodic systems, a density one subsequence of eigenfunctions equidistributes in phase space, yet when the classical dynamics is integrable eigenfunctions tend to localise or scar. We are interested in the Schr\"{o}dinger equation on $\RR^3$ with potential described by point scatterers placed on $2\pi\ZZ^3$ which is described by the formal operator
\begin{align}
-\Delta + c \sum_{j \in 2\pi\ZZ^3} \delta_{j+x_0}.
\end{align}
This operator is unitarily equivalent via a gauge transformation to a direct integral over quasimomenta $k$. That is, we can instead consider a related quasiperiodic problem on the torus which is then realised via Von Neumann self-adjoint extension theory \cite{Albeverio}. We first show that almost all of the eigenfunctions of this operator equidistribute in position space. We then prove that there is a positive density sequence of eigenfunctions which do not equidistribute in momentum space, specifically we can find a subsequence that partially localises in a given direction.

\begin{figure}[h]
\centering
\subfloat[$\lambda_{14322} \approx$ 203.630]{
  \includegraphics[width=60mm]{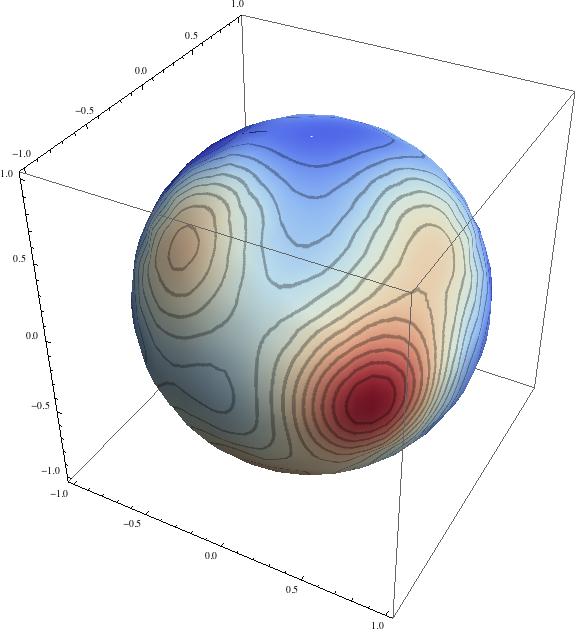}
}
\subfloat[$\lambda_{23985} \approx$ 292.147]{
  \includegraphics[width=60mm]{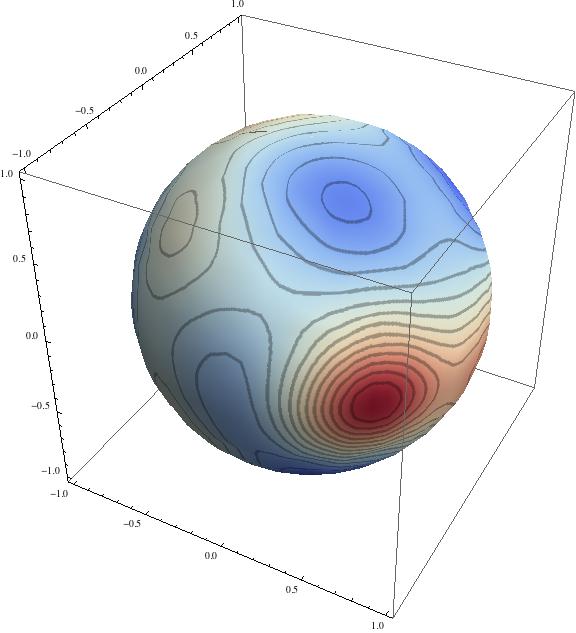}
}
\hspace{0mm}
\subfloat[$\lambda_{45414} \approx$ 454.925]{
  \includegraphics[width=60mm]{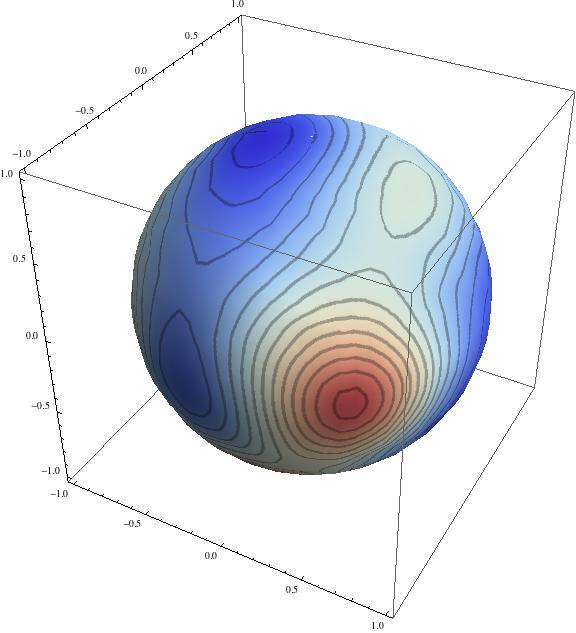}
}
\subfloat[$\lambda_{65109} \approx$ 583.445]{
  \includegraphics[width=60mm]{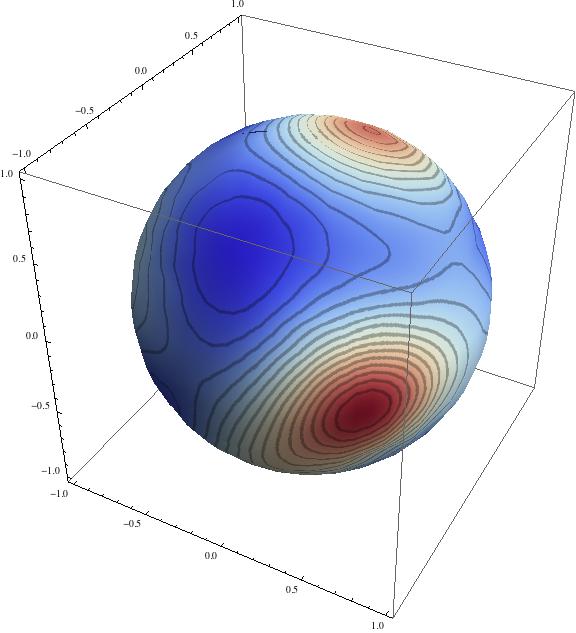}
}
\caption{A collection of non-consecutive eigenfunctions in momentum space with eigenvalue $\lambda$ showing partial localisation in the fixed direction $(1,-1,0)$. We again use fixed quasimomentum $k = (\tfrac{1}{\sqrt{2}}, \tfrac{1}{\sqrt{3}},\tfrac{1}{\sqrt{5}})$.}
\end{figure}

\newpage
\newpage
\section{Setup}
Consider the positive operator $-\Delta_k$ on $\TT^3 = \RR^3/(2\pi\ZZ^3)$ defined by 
\begin{align}\label{operatordef}
\Delta_ k = \left( \frac{\partial}{\partial x} + \i  k_1 \right)^2 + \left( \frac{\partial}{\partial y} + \i k_2 \right)^2 + \left( \frac{\partial}{\partial z} +\i  k_3 \right)^2.
\end{align}
The eigenfunctions of this operator are the complex exponentials
\begin{align}
\frac{1}{(2 \pi)^{3/2}} \e^{\i \langle \xi, \, x \rangle}
\end{align}
with eigenvalue $|\xi+k|^2$, $\xi \in \ZZ^3$. Note here that when the components of $k$ are irrational and linearly independent, then all such eigenvalues are distinct. This will be of major importance in the subsequent proofs. We will write $\scrN=\{n_j \,|\, j\in\NN\}$ to denote the ordered sequence of these eigenvalues. Equivalently, we could consider the standard Laplacian on $\TT^3$ on functions that satisfy the quasiperiodic boundary conditions $\psi(x+\gamma) = \e^{i \langle\gamma,k\rangle} \psi(x)$ for $\gamma$ in $2\pi\ZZ^3$. In this case the eigenfunctions are proportional to the exponentials $\e^{\i \langle \xi + k, x \rangle}$ and again have corresponding eigenvalue $|\xi+k|^2$ - it turns out that the first formulation is more convenient in our case. It is worth noting that this operator occurs naturally when considering the Laplacian on $\RR^3$ with some periodic potential. It is known that provided $V(x+\gamma) = V(x)$ for all $\gamma \in 2\pi\ZZ^3$ then the operator on $\RR^3$ given by $-\Delta + V(x)$ has a direct integral decomposition into operators on $\TT^3$ of the form $-\Delta_k + V(x)$. Full details of this procedure for a general operator can be found in \cite{Cats}. We consider the perturbation of the operator $-\Delta_k$ by a $\delta$ potential at a given point $x_0\in\TT^3$. We realise the perturbed operator 
\begin{align}
H_ k = -\Delta_k + \delta_{x_0}
\end{align}
via self-adjoint extension theory. Details of this calculation can be found in e.g. \cite{Albeverio}. The idea is that if we restrict our operator to functions vanishing at the point $x_0$, it should act like $-\Delta_k$. This operator is then positive symmetric but not self-adjoint, so we extend the domain of functions in such a way that self-adjointness is regained. If we define the restricted Laplacian, $-\Delta_0 := -\Delta \mid_{\scrD_0}$ with 
\begin{align}
\scrD_0 := C^\infty(\TT^3 / \{x_0\}),
\end{align}
then the deficiency indices are $(1,1)$ and the deficiency elements are the Green's functions, $G_{\pm \i}(x,x_0)$, where we define $G_\lambda$ by
\begin{align}
G_\lambda(x, x_0) := (\Delta_ k + \lambda)^{-1} \delta(x-x_0) \overset{\L^2}{=} -\frac{1}{8 \pi^3} \sum_{\xi \in \ZZ^3} \frac{ \e^{\i \langle \xi,\, x - x_0 \rangle}}{|\xi+ k|^2 - \lambda}.
\end{align}
Throughout the paper we will also use $g_{\lambda} = G_\lambda/\|G_{\lambda}\|$ to denote the normalised Green's functions. There therefore exists a 1-parameter family of self-adjoint extensions parametrised by $\phi$ which we denote by $\Delta_{k,\phi}$. The domains of these operators consist of functions $f$ such that
\begin{align}
f(x) = C \big( \cos(\phi/2) \frac{1}{4 \pi |x - x_0|} + \sin(\phi/2) \big) + o(1)
\end{align}
as $x \to x_0$. The domain of $\Delta_{k,\phi}$ can be written
\begin{align}
\scrD_\phi &= \Big\{h + c G_\i( \cdot, x_0) + c \e^{\i \phi} G_{-\i}(\cdot, x_0) \mid h \in \scrD_0, c \in \CC, \phi \in (-\pi,\pi) \Big\},
\end{align}
and the action of $\Delta_{k,\phi}$ is given by
\begin{align}
-\Delta_{k,\phi} f &= -\Delta_k \, h + c \i G_\i ( \cdot , x_0) - c \e^{\i \phi} \i G_{-\i}( \cdot, x_0).
\end{align}
The new perturbed eigenvalues are given by solutions of the equation
\begin{align}
\sum_{\xi \in \ZZ^3} \left( \frac{1}{|\xi+ k|^2-\lambda} - \frac{|\xi+ k|^2}{|\xi+ k|^4 + 1} \right) = c_0 \tan(\phi/2),
\end{align}
where
\begin{align}
c_0 &= \sum_{\xi \in \ZZ^3} \frac{1}{|\xi+ k|^4+1}.
\end{align}
The set of perturbed eigenvalues will be denoted by $\Lambda$.
\section{Statement of Results}
We state the main results as two separate theorems, the first concerning pure position observables, the second concerning full phase space observables. To deal with phase space we first need to define quantisation. We follow the approach used in \cite{Yesha_QE}. Consider a classical symbol $a \in C^\infty(S^*\TT^3)$, where $S^*\TT^3 \simeq \TT^3 \times S^2$. We define the quantisation $\Op(a)$ by
\begin{align}
(\Op(a) f)(x) = \sum_{\xi \in \ZZ^3} \e^{\i \langle \xi,x \rangle} a(x,\overline{\xi+ k}) \hat{f}(\xi),
\end{align}
where we use the notation $\overline{\xi} = \tfrac{\xi}{|\xi|}$  and $|\xi+k|$ is assumed to be nonzero for all $\xi$.
We can then expand $a$ in functions $\e_{\zeta,l,m}(x,\xi) = Y_{l,m}(\overline{\xi}) \e^{\i \langle \zeta , x \rangle}$, where $Y_{l,m}(\xi)$ is the (normalised) spherical harmonic of degree $l$ and order $m$. Specifically we consider some finite polynomial $P$ defined by
\begin{align}
P(x,\xi) = \sum_{|\zeta|\leq N_1} \sum_{l \leq N_2} \sum_{|m|\leq l} c_{\zeta,l,m}\e_{\zeta,l,m}(x,\xi),
\end{align}
and claim that for all $a \in C^{\infty}(S^*\TT^3)$ there exist $N_1$ and $N_2$ such that for all $(x,\xi) \in S^*\TT^3$ and multi-indices $\alpha$ with $|\alpha| < 2$ we have
\begin{align}
| \partial_x^\alpha(a(x,\xi) - P(x,\xi))| < \epsilon.
\end{align}
In light of this it suffices to prove our theorem only for these finite polynomials - the extension to a wider class of functions can be performed by expanding the function in a basis of these polynomials, truncating at some finite order, and controlling the error term  (see \cite{Yesha_QE} for details). We are now able to state the main results. Let $\Lambda$ denote the sequence of perturbed eigenvalues.
\begin{defn} A vector $k \in \RR^d$ is said to be Diophantine of type $\kappa$ if there exists a constant $C$ such that for all $m \in \RR^d, q \in \NN$ we have
\begin{align}
\max_j \left| k_j - \frac{m_j}{q} \right| > \frac{C}{q^\kappa}.
\end{align}
The smallest possible value of $\kappa$ is  $\kappa = 1+\tfrac{1}{d}$. In this case $k$ is called badly approximable. We now consider exclusively $3$ dimensions, where $\kappa \geq 4/3$.
\end{defn}
The first theorem concerns position space equidistribution and is proved in Section 5.
\begin{thm} Fix $\phi \in (-\pi,\pi)$. Assume the components of $(1,k)$ are linearly independent over $\QQ$. Then, there is a density one subset $\Lambda' \subset \Lambda$ such that for all observables $a \in C^{\infty}(\TT^3)$  we have
\begin{align}
\lim_{\lambda \to \infty} \langle a(x) g_\lambda(x), g_\lambda(x) \rangle = \frac{1}{8\pi^3} \int_{\TT^3} a(x) \, dx
\end{align}
with $\lambda \in \Lambda'$.
\label{theorem1}
\end{thm}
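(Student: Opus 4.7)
The plan is to reduce the theorem to bounding a ratio of lattice sums, and to control that ratio via the shifted-sphere statistics of $|\xi+k|^2$ under the Diophantine hypothesis. Since $g_\lambda$ has unit norm the case of a constant observable is immediate. For a general $a\in C^\infty(\TT^3)$ the Fourier coefficients $\hat a(\zeta)$ decay faster than any polynomial, and since the matrix elements are uniformly bounded in $\lambda$, a standard diagonal extraction reduces matters to showing that for every fixed $\zeta\in\ZZ^3\setminus\{0\}$,
\[
\langle e^{i\langle\zeta,x\rangle}g_\lambda, g_\lambda\rangle \longrightarrow 0
\]
along a density-one subset of $\Lambda$.

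Expanding $G_\lambda$ in its Fourier series and integrating term by term yields
\[
\langle e^{i\langle\zeta,x\rangle}g_\lambda, g_\lambda\rangle = \frac{e^{i\langle\zeta,x_0\rangle}}{8\pi^3}\cdot\frac{A_\lambda(\zeta)}{B_\lambda},
\]
with $A_\lambda(\zeta) = \sum_{\xi\in\ZZ^3}[(|\xi+k|^2-\lambda)(|\xi+\zeta+k|^2-\lambda)]^{-1}$ and $B_\lambda = \sum_{\xi\in\ZZ^3}(|\xi+k|^2-\lambda)^{-2}$. Note that $B_\lambda$ is precisely the derivative $F'(\lambda)$ of the secular function defining $\Lambda$. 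The lower bound on $B_\lambda$ exploits the interlacing $n_j<\lambda_j<n_{j+1}$: by the Weyl-type counting $\#\{\xi : |\xi+k|^2\leq R\}\sim \tfrac{4\pi}{3}R^{3/2}$ the typical spacing is of order $\lambda^{-1/2}$, and the secular equation forces the nearer unperturbed level to lie within a sub-polynomial factor of this distance, giving $B_\lambda\gg \lambda^{1-\varepsilon}$ off a sparse set of $\lambda\in\Lambda$. The exceptional set of atypically large spacings is removed by a Borel--Cantelli argument whose input is the distribution of gaps for the Diophantine shifted lattice.

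For the numerator, the key observation is that $|\xi+\zeta+k|^2 - |\xi+k|^2 = |\zeta|^2 + 2\langle\xi+k,\zeta\rangle$ has magnitude of order $|\zeta|\sqrt\lambda$ for the majority of $\xi$ with $|\xi+k|^2\approx\lambda$, so whenever one factor in a term of $A_\lambda(\zeta)$ is small, the other is typically as large as $\sqrt\lambda$. A dyadic decomposition of $A_\lambda(\zeta)$ according to the size of $||\xi+k|^2-\lambda|$, with each shell's contribution controlled by the inhomogeneous two-point correlation result of \cite{Marklof_IQF2}, should give $|A_\lambda(\zeta)|\ll\lambda^{1/2+\varepsilon}$ outside a set of density zero, from which the required decay of the ratio follows. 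The main obstacle is precisely this pair-correlation estimate: one has to rule out the rare coincidence that a single $\xi$ makes both $|\xi+k|^2$ and $|\xi+\zeta+k|^2$ close to $\lambda$, and it is here that the Diophantine hypothesis on $k$ is used in an essential way.
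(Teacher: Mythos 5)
Your reduction to showing $\langle \e^{\i\langle\zeta,x\rangle}g_\lambda,g_\lambda\rangle\to 0$ for each fixed $\zeta\neq 0$, and your lower bound $B_\lambda=\|G_\lambda\|^2(8\pi^3)^2\gg\lambda^{1-\epsilon}$ on a density-one subset, match the paper (Lemma 4.1; note the paper gets this by a simple Chebyshev argument on the mean gap from the Weyl law --- no Borel--Cantelli and no Diophantine input are needed there). The genuine gap is in your treatment of the numerator $A_\lambda(\zeta)$, which is the heart of the proof and which you leave unproven (``should give'', ``the main obstacle is precisely this pair-correlation estimate''). Moreover the route you sketch cannot prove the theorem as stated: you invoke the Diophantine hypothesis on $k$ and the two-point correlation theorem of Marklof (which requires $k$ Diophantine of type $\kappa<2$), but Theorem \ref{theorem1} assumes only that the components of $(1,k)$ are linearly independent over $\QQ$; the Diophantine condition enters only in Theorem \ref{theorem2}. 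Your heuristic that $|\xi+\zeta+k|^2-|\xi+k|^2=|\zeta|^2+2\langle\xi+k,\zeta\rangle$ is of size $|\zeta|\sqrt{\lambda}$ ``for the majority of $\xi$'' does not control the dangerous terms, which come precisely from the minority of $\xi$ near the hyperplane $\langle\xi+k,\zeta\rangle\approx 0$, and you offer no actual estimate for them.

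What you are missing is an elementary observation that makes all of this machinery unnecessary: since $2\langle\xi,\zeta\rangle+|\zeta|^2\in\ZZ$ for $\xi,\zeta\in\ZZ^3$, one has
\begin{align}
\bigl|\,|\xi+\zeta+k|^2-|\xi+k|^2\,\bigr|=\bigl|2\langle\xi+k,\zeta\rangle+|\zeta|^2\bigr|\;\geq\;\|2\langle k,\zeta\rangle\|\;=:\;\epsilon_0>0
\end{align}
uniformly in $\xi$, where $\|\cdot\|$ is the distance to the nearest integer and $\epsilon_0>0$ by the $\QQ$-linear independence of the components of $(1,k)$ (this is Lemma \ref{nonorthogonal}). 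Hence $|\xi+k|^2$ and $|\xi+\zeta+k|^2$ can never both lie within $\epsilon_0/2$ of $\lambda$, so there is no ``rare coincidence'' to rule out and no pair-correlation input is needed. The paper exploits this by first truncating the Green's function to frequencies with $||\xi+k|^2-\lambda|<L=\lambda^{-\delta}$ (the truncation error is controlled using only the Weyl law with error exponent $\theta<1$ together with the lower bound on $\|G_\lambda\|$, Lemma 4.2 and Corollary \ref{corollary}); once $L<\epsilon_0/2$ the off-diagonal matrix elements of the truncated $g_{\lambda,L}$ vanish identically (Proposition \ref{prop1}), and only the $\zeta=0$ term survives, giving the mean of $a$. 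If you prefer to avoid truncation, the same uniform separation $\epsilon_0$ combined with the Abel-summation estimates already used for the truncation lemma yields $|A_\lambda(\zeta)|\ll_{\zeta}\lambda^{\theta+\epsilon}$ with $\theta<1$, which beats $B_\lambda\gg\lambda^{1-\epsilon}$; either way the argument is deterministic in $\lambda$ apart from the density-one set coming from the norm lower bound, and requires no Diophantine assumption.
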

The second theorem concerns simultaneous equidistribution in position space and partial localisation in momentum space and is proved in Section 6.
\begin{thm} Fix $\phi \in (-\pi,\pi)$. Let $k$ be diophantine of type $\kappa \in [4/3, 2)$ and assume the components of $(1,k)$ are linearly independent over $\QQ$. Then, for all $\epsilon > 0$ there is a subset $M_\epsilon \subset \NN$ of density at least $1-\epsilon$ such that for all subsequences $(\lambda_n)_{n \in M_\epsilon} $, there exists a further subsequence $(\lambda_{n_j})_{j\in\NN}$ such that for all observables $a \in C^\infty(S^*\TT^3)$ we have
\begin{align} \label{theorem2equation}
\lim_{j\to\infty} \langle \Op(a(x,\xi)) g_{\lambda_{n_j}}(x),g_{\lambda_{n_j}}(x) \rangle = \frac{1}{\vol(S^*\TT^3)} \int_{S^*\TT^3} a(x,\xi) \ dx \, d\mu(\bar{\xi})
\end{align}
where $\mu$ has a positive proportion of its mass supported on a finite number of points.\label{theorem2}
\end{thm}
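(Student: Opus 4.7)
The plan is to reduce \eqref{theorem2equation} to the case of polynomial symbols $e_{\zeta,l,m}$, write the matrix element on the Fourier side, and split into a diagonal contribution ($\zeta=0$) and an off-diagonal contribution ($\zeta\neq 0$). The off-diagonal part will be shown to vanish along a density-one subset via the Diophantine hypothesis, while the diagonal part, together with Marklof's two-point correlation result \cite{Marklof_IQF2}, will provide the desired partial momentum localisation from near-coincidences in the spectrum $\scrN$.

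\textbf{Step 1 (matrix element on the Fourier side).} Using the expansion $\widehat{g_\lambda}(\xi) \propto \|G_\lambda\|^{-1}(|\xi+k|^2-\lambda)^{-1}\e^{-\i\langle\xi,x_0\rangle}$, Parseval gives, up to an overall phase and normalising constant,
\begin{align*}
\langle\Op(e_{\zeta,l,m})g_\lambda,g_\lambda\rangle \;\propto\; \frac{1}{\|G_\lambda\|^2}\sum_{\xi\in\ZZ^3}\frac{Y_{l,m}(\overline{\xi+k})\,\e^{-\i\langle\zeta,x_0\rangle}}{(|\xi+k|^2-\lambda)(|\xi+\zeta+k|^2-\lambda)}.
\end{align*}
For $\zeta\neq 0$, apply the partial-fraction identity with splitting denominator $D_\zeta(\xi):=|\zeta|^2+2\langle k+\xi,\zeta\rangle$. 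The $\QQ$-independence of $(1,k)$ rules out $D_\zeta(\xi)=0$, and the Diophantine hypothesis forces $|D_\zeta(\xi)|\gg |\xi|^{-\kappa(d-1)}$ for $\xi$ in the relevant shell. Bounding the resulting single Green's-function sums by an $L^2$/second-moment computation as in \cite{Yesha_QE}, one shows that the $\zeta\neq 0$ contribution tends to $0$ along a density-one subset of $\Lambda$; intersecting these density-one sets over finitely many $(\zeta,l,m)$ from the polynomial $P$ leaves the required density $\geq 1-\epsilon$ set $M_\epsilon$.

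\textbf{Step 2 (diagonal piece and localisation via small gaps).} On $M_\epsilon$, the matrix element reduces to the weighted spherical-harmonic sum
\begin{align*}
(2\pi)^3\sum_{\xi\in\ZZ^3}|\widehat{g_\lambda}(\xi)|^2\,Y_{l,m}(\overline{\xi+k}),\qquad |\widehat{g_\lambda}(\xi)|^2\propto\frac{1}{\|G_\lambda\|^2(|\xi+k|^2-\lambda)^2}.
\end{align*}
The interlacing of $\Lambda$ with $\scrN$ places exactly one perturbed eigenvalue $\lambda=\lambda_j$ in each interval $(n_j,n_{j+1})$. Marklof's convergence of the two-point correlation of $\{|\xi+k|^2\}$ (valid for $\kappa<2$) to the Poisson form implies that for any $\delta>0$ there is a positive-density set of indices $j$ for which (i) the gap $n_{j+1}-n_j$ is at most $\delta$ times the local mean spacing $\bar s(n_j)\asymp n_j^{-1/2}$, while (ii) the adjacent gaps $n_j-n_{j-1}$ and $n_{j+2}-n_{j+1}$ are of order $\bar s(n_j)$. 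For such $j$, the terms $\xi_j,\xi_{j+1}$ with $|\xi_j+k|^2=n_j$, $|\xi_{j+1}+k|^2=n_{j+1}$ dominate $\|G_\lambda\|^2$ with relative error $O(\delta^2)$, so the weights $|\widehat{g_\lambda}(\xi_j)|^2+|\widehat{g_\lambda}(\xi_{j+1})|^2$ are bounded below by a positive constant.

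\textbf{Step 3 (passage to the limit).} Intersect the positive-density set from Step 2 with $M_\epsilon$ and extract a further subsequence along which (a) the two unit vectors $\overline{\xi_j+k}$ and $\overline{\xi_{j+1}+k}$ converge in $S^2$ to limits $\hat n_1,\hat n_2$, and (b) the weight ratio $|\widehat{g_\lambda}(\xi_j)|^2:|\widehat{g_\lambda}(\xi_{j+1})|^2$ converges. Combined with Step 1, the matrix element then converges to an expression of the form $\alpha\,Y_{l,m}(\hat n_1)+\beta\,Y_{l,m}(\hat n_2)+(1-\alpha-\beta)\cdot\frac{1}{4\pi}\int_{S^2} Y_{l,m}\,d\sigma$ with $\alpha+\beta$ bounded below independently of $(l,m)$, which identifies the limiting measure $\mu$ as a convex combination of two point masses and the uniform measure on $S^2$. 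Linearity in $a$ and the polynomial density argument conclude the theorem.

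\textbf{Main obstacle.} The most delicate point is Step 2: quantitatively controlling the tail $\sum_{\xi\neq\xi_j,\xi_{j+1}}(|\xi+k|^2-\lambda)^{-2}$ uniformly on the small-gap set so that the two chosen terms genuinely dominate $\|G_\lambda\|^2$, and simultaneously verifying that the small-gap set of positive density supplied by Marklof's theorem is still of positive density after intersection with the density-one set $M_\epsilon$ used to kill the off-diagonal piece. This is where the restriction $\kappa\in[4/3,2)$ enters essentially, via the hypotheses of \cite{Marklof_IQF2}.
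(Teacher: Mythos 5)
There is a genuine gap, and it lies in Step 2 together with the quantifier structure of Theorem \ref{theorem2}. The theorem demands a set $M_\epsilon$ of density at least $1-\epsilon$ such that \emph{every} subsequence drawn from $M_\epsilon$ has a further subsequence along which the limit measure retains positive atomic mass. Your construction delivers localisation only along the small-gap set, whose density is of order $\delta$ (positive but small), intersected with your density-one set from Step 1; a subsequence of your $M_\epsilon$ that avoids the small-gap set is completely uncontrolled, and its limit could well be the uniform measure. The paper is arranged precisely to meet this quantifier: it removes three exceptional sets of density at most $1/G$ each (Lemma \ref{lem1} for eigenvalues whose gap to the left neighbour exceeds $G/\sqrt{m}$, Lemma \ref{lem2} for eigenvalues with more than $E$ neighbours within $D/\sqrt{m}$, and Lemma \ref{lem3} plus Chebyshev for eigenvalues with a large tail sum), and shows that for \emph{every} remaining $m$ the unnormalised measure has one atom of mass $\gg m/G^2$, at most $E$ further nearby atoms, and a tail of total mass $\leq Fm$. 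Hence a uniformly positive proportion of the mass sits on at most $E+1$ points for every eigenvalue of the density-$(1-3/G)$ set, and weak-$*$ compactness gives the conclusion along arbitrary subsequences.

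Moreover, the two-atom domination you assert does not follow from the pair correlation theorem (Theorem \ref{IQF}) as described. Conditionally on the rare event that $n_{j+1}-n_j$ is at most $\delta$ times the mean spacing (an event of density $\asymp\delta$), pair correlation does not let you also insist, on a positive-density subset of such $j$, that all other levels stay at distance comparable to the mean spacing: the Chebyshev removal of levels possessing an extra neighbour within $c$ times the mean spacing costs density $\asymp c$, which must be taken $\ll\delta$ for the intersection to survive, and then those nearby terms can be as large as, or larger than, your two designated main terms, destroying the claimed $1-O(\delta^2)$ domination of $\|G_\lambda\|^2$; excluding clustering of small gaps genuinely requires information beyond the two-point function. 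The paper asks much less of each eigenvalue --- one atom carrying a fixed positive (not nearly full) proportion of the mass --- which needs only an upper bound on the relevant gap and the tail bounds of Lemmas \ref{lem2} and \ref{lem3}, all accessible from pair correlation. A further, smaller, misattribution in your Step 1: no Diophantine condition is needed to kill the $\zeta\neq 0$ terms, since linear independence of $(1,k)$ over $\QQ$ already gives the uniform lower bound of Lemma \ref{nonorthogonal}, so that after the truncation $L=\lambda^{-\delta}$ (justified on a density-one set by $\|G_\lambda\|\gg\lambda^{1/2-\epsilon}$) these terms vanish identically for large $\lambda$ (Proposition \ref{prop1}); the Diophantine hypothesis enters only through Theorem \ref{IQF}.
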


\section{Truncation}
In order to consider only finite sums we define a truncated Green's function. Define $A(\lambda,L)$ by
\begin{align}
A(\lambda,L) = \{ \xi \in \ZZ^3 : | |\xi+ k|^2-\lambda| < L \},
\end{align}
we then define the truncated Green's function by
\begin{align}
G_{\lambda,L}(x, x_0) = -\frac{1}{8 \pi^3} \sum_{\xi \in A(\lambda,L)} \frac{ \e^{\i \langle \xi,\, x - x_0 \rangle}}{|\xi+ k|^2 - \lambda},
\end{align}
and as before we denote by $g_{\lambda,L} = G_{\lambda,L}/ \|G_{\lambda,L}\|$ the corresponding normalised truncated Green's function. We want to show that for $L= \lambda^{-\delta}$ for some $\delta$ this truncation is a good approximation for large $\lambda$. We first need a lower bound on the full Green's function. Define
\begin{align}
\scrN(x) = \{ n \in \scrN \, \mid \, n \leq x\}.
\end{align}
If the components of $(1,k)$ are linearly independent over $\QQ$ then we know the asymptotic behaviour of $\scrN(x)$ to be
\begin{align}
N(x) = \#\scrN(x) = \frac{4}{3} \pi x^{3/2} + O(x^\theta).
\label{weyl}
\end{align}
It is conjectured that $\theta = \tfrac{1}{2} +\epsilon$ for all $\epsilon$, and for $ k=0$, when counting with multiplicities, the current best explicit bound due to Heath-Brown \cite{Heath-Brown} gives $\theta = \tfrac{21}{32}+ \epsilon$ for all $\epsilon > 0$. For our purposes it is required that $\theta < 1$, in fact we will show in the Appendix that we have $\theta < \frac{3}{4} + \epsilon$ independent of $k$.
\begin{lem} \label{4.1}
Let the components of $(1,k)$ be linearly independent over $\QQ$. Then, there is a density one subset of eigenvalues $\Lambda' \subset \Lambda$ such that for $\lambda \in \Lambda'$
\begin{align}
\|G_\lambda\| \gg \lambda^{1/2-\epsilon}.
\end{align}
\end{lem}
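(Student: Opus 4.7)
The plan is to exploit the fact that $\|G_\lambda\|^2=\tfrac{1}{64\pi^6}\sum_{\xi\in\ZZ^3}(|\xi+k|^2-\lambda)^{-2}$ is a sum of positive terms, so it suffices to retain a single dominant one. Concretely, if there exists $n\in\scrN$ with $|\lambda-n|\le \lambda^{-1/2+\epsilon}$, then keeping only the corresponding term gives $\|G_\lambda\|^2\gg \lambda^{1-2\epsilon}$ and hence the desired bound. So the lemma reduces to showing that the set of $\lambda\in\Lambda$ lying within $\lambda^{-1/2+\epsilon}$ of some unperturbed eigenvalue has density one in $\Lambda$.

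The structural input is interlacing. The perturbed eigenvalues $\lambda\in\Lambda$ are the zeros of $F(\lambda)-c_0\tan(\phi/2)$, where $F(\lambda)=\sum_{\xi}\bigl((|\xi+k|^2-\lambda)^{-1}-|\xi+k|^2(|\xi+k|^4+1)^{-1}\bigr)$. Since $F'(\lambda)=\sum_{\xi}(|\xi+k|^2-\lambda)^{-2}>0$ and $F$ has simple poles at each $n_j\in\scrN$, it is a strictly increasing bijection $(n_j,n_{j+1})\to\RR$, so there is exactly one $\lambda\in\Lambda$ in each unperturbed gap. Call $\lambda\in\Lambda$ bad if $\min_{n\in\scrN}|\lambda-n|>\lambda^{-1/2+\epsilon}$; by interlacing, every bad $\lambda$ then sits in a gap $(n_j,n_{j+1})$ of length exceeding $2\lambda^{-1/2+\epsilon}$.

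Counting such long gaps in a dyadic window is elementary. For $\lambda\in[T,2T]$ being bad forces the surrounding gap to have length at least $H:=cT^{-1/2+\epsilon}$ for some constant $c>0$. Since the lengths of the disjoint gaps within $[T,2T]$ sum to at most $T$, the number of gaps of length $\ge H$ is at most $T/H\ll T^{3/2-\epsilon}$. On the other hand, Weyl's law \eqref{weyl} together with interlacing gives $\#(\Lambda\cap[T,2T])\sim cT^{3/2}$, so the fraction of bad $\lambda$ in $[T,2T]$ is $O(T^{-\epsilon})$. Summing dyadically yields $\#\{\text{bad }\lambda\le X\}\ll X^{3/2-\epsilon}=o(\#(\Lambda\cap[0,X]))$, so the good set $\Lambda'$ has density one, and for any $\lambda\in\Lambda'$ the dominant term argument of the first paragraph gives $\|G_\lambda\|\gg\lambda^{1/2-\epsilon}$.

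I do not expect a serious obstacle: the interlacing is immediate from monotonicity of $F$ between poles, and the long-gap bound uses nothing beyond the trivial estimate that gap lengths sum to the length of the interval. The only point requiring care is that $\epsilon$ can be taken arbitrarily small by re-choosing the gap threshold, at the cost of adjusting the implied constant; if one wanted to remove the $\epsilon$ (or replace it by a logarithmic factor) one would need to exploit genuine gap statistics of $\scrN$ rather than just the total-length bound, but the weaker bound stated in the lemma is all that is needed downstream and follows from this elementary count.
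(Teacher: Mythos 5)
Your proposal is correct and follows essentially the same route as the paper: interlacing places each perturbed eigenvalue in an unperturbed gap, a first-moment (Chebyshev-type) count based on Weyl's law shows that gaps longer than $\lambda^{-1/2+\epsilon}$ are of density zero, and on the complementary density-one set the single term of $\|G_\lambda\|^2$ nearest to $\lambda$ already gives $\|G_\lambda\|^2\gg\lambda^{1-2\epsilon}$. The only cosmetic difference is that you count long gaps dyadically and make the interlacing explicit via monotonicity of the spectral function, whereas the paper averages the gaps globally and leaves the interlacing implicit.
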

\begin{proof}
We have from \eqref{weyl} that
\begin{align}
\frac{1}{N(x)} \sum_{n_k\leq x} (n_k-n_{k-1}) \asymp \frac{x}{N(x)} \asymp x^{-1/2}.
\end{align}
Thus, since $n_k-n_{k-1} > 0$ we must have that for a subsequence of density one that
\begin{align}
n_{k+1}-n_k \ll n_{k+1}^{-1/2+\epsilon}.
\end{align}
Let $n_{k+1} > \lambda > n_k$ and we see
\begin{align}
\|G_\lambda\|^2 \gg \sum_{n \in \scrN} \frac{1}{(n-\lambda)^2} > \frac{1}{(n_{k+1} - \lambda)^2} > \frac{1}{(n_{k+1}-n_k)^2} \gg n_{k+1}^{1-\epsilon} > \lambda^{1-\epsilon}.
\end{align}
\end{proof}

\begin{lem}
Let $L= \lambda^{-\delta}$, then $\|g_{\lambda,L}-g_{\lambda} \| \to 0$ as $\lambda \to \infty$ with $\lambda \in \Lambda'$.
\end{lem}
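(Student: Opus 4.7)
My plan is to exploit the fact that $G_{\lambda, L}$ is an orthogonal Fourier truncation of $G_\lambda$. Since the complex exponentials $\{\e^{\i\langle \xi, \cdot\rangle}\}_{\xi \in \ZZ^3}$ are pairwise orthogonal in $L^2(\TT^3)$, the decomposition $G_\lambda = G_{\lambda, L} + (G_\lambda - G_{\lambda, L})$ is orthogonal, and hence $\langle G_\lambda, G_{\lambda, L}\rangle = \|G_{\lambda, L}\|^2$. A short computation gives
\begin{align*}
\|g_\lambda - g_{\lambda, L}\|^2 = 2 - \frac{2\|G_{\lambda, L}\|}{\|G_\lambda\|},
\end{align*}
so it suffices to show $\|G_\lambda - G_{\lambda, L}\|^2 = o(\|G_\lambda\|^2)$ along $\lambda \in \Lambda'$.

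The denominator is already controlled by Lemma \ref{4.1}, which gives $\|G_\lambda\|^2 \gg \lambda^{1-\epsilon}$. For the numerator, the linear-independence hypothesis on $(1, k)$ ensures the values $|\xi + k|^2$ are distinct, so by the Plancherel relation
\begin{align*}
\|G_\lambda - G_{\lambda, L}\|^2 \asymp \sum_{n \in \scrN,\, |n - \lambda| \geq L} \frac{1}{(n - \lambda)^2}.
\end{align*}
I would estimate this tail by a dyadic decomposition into annuli $\{2^j L \leq |n - \lambda| < 2^{j+1} L\}$ for $j \geq 0$. By the Weyl-type asymptotic \eqref{weyl}, each such annulus contains at most $O(\lambda^{1/2}\cdot 2^j L + \lambda^\theta)$ members of $\scrN$, so it contributes $O((2^j L)^{-1}\lambda^{1/2} + (2^j L)^{-2}\lambda^\theta)$ to the sum. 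Summing geometrically in $j$ up to $2^j L \asymp \lambda$, and bounding the far tail $|n - \lambda| \gtrsim \lambda$ trivially via convergence of $\sum_n n^{-2}$ (valid since $n_k \asymp k^{2/3}$), yields
\begin{align*}
\|G_\lambda - G_{\lambda, L}\|^2 \ll \frac{\lambda^{1/2}}{L} + \frac{\lambda^\theta}{L^2} + O(1).
\end{align*}

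Substituting $L = \lambda^{-\delta}$ and combining with Lemma \ref{4.1}, the ratio becomes $\ll \lambda^{-1/2 + \delta + \epsilon} + \lambda^{\theta - 1 + 2\delta + \epsilon}$. Invoking the appendix's bound $\theta < 3/4 + \epsilon$, any $\delta$ with $0 < \delta < (1 - \theta)/2$ (for instance $\delta = 1/16$) renders both exponents negative and concludes the argument. The point I expect to require the most care is the Weyl error term rather than the main term: the $\lambda^\theta/L^2$ contribution is what ultimately constrains $\delta$, forcing it well below the value $\delta < 1/2$ one would naively guess from the $\lambda^{1/2}/L$ piece, and the whole scheme only closes because the appendix secures the sub-unit bound $\theta < 1$.
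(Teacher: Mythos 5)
Your proposal is correct and follows essentially the same route as the paper: reduce $\|g_\lambda - g_{\lambda,L}\|$ to the ratio $\|G_\lambda - G_{\lambda,L}\|/\|G_\lambda\|$, bound the tail sum $\sum_{|n-\lambda|>L}(n-\lambda)^{-2}$ using the Weyl-type count \eqref{weyl} to get $\ll \lambda^{1/2}/L + \lambda^\theta/L^2$, and divide by the lower bound of Lemma \ref{4.1}, arriving at the same constraint $\delta < (1-\theta)/2 - \epsilon$. The only differences are cosmetic: you use the exact orthogonality identity for the normalised functions where the paper uses a triangle-inequality estimate, and a dyadic decomposition where the paper uses Abel summation, both of which are fine.
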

\begin{proof}
First we see that 
\begin{align}
\|g_{\lambda,L}-g_{\lambda}\| &= \left\| \frac{G_\lambda}{\|G_\lambda\|} - \frac{G_{\lambda,L}}{\|G_{\lambda,L}\|} \right\| \\
&= \left\| \frac{G_\lambda}{\|G_\lambda\|} - \frac{G_{\lambda,L}}{\|G_\lambda\|} + \frac{G_{\lambda,L}}{\|G_\lambda\|} - \frac{G_{\lambda,L}}{\|G_{\lambda,L}\|} \right\| \\
&\leq \frac{\|G_\lambda - G_{\lambda,L}\|}{\|G_\lambda\|} + \|G_{\lambda,L}\| \left| \frac{1}{\|G_\lambda\|} - \frac{1}{\|G_{\lambda,L}\|} \right| \\
&\leq 2 \frac{\|G_\lambda - G_{\lambda,L}\|}{\|G_\lambda\|}.
\end{align}
Then we have
\begin{align}
\| G_\lambda - G_{\lambda,L} \|^2 \ll  \sum_{||\xi+ k|^2 - \lambda|>L} \frac{1}{(|\xi+ k|^2 - \lambda)^2}.
\end{align}
We evaluate the lattice sum via Abel summation, which tells us that for a smooth function $f$ we have
\begin{align}
\sum_{n_A<|\xi+ k|^2<n_B} f(|\xi+ k|^2) = N(n_B)f(n_B)-N(n_A)f(n_{A+1}) - \int_{n_{A+1}}^{n_B} f'(t)N(t)\, dt.
\end{align}
Integrating by parts we see
\begin{align}
\sum_{n_A<|\xi+ k|^2<n_B} f(|\xi+ k|^2) &= 2 \pi \int_{n_{A+1}}^{n_B} f(t) t^{1/2} \, dt \\ &+ O (n_B^{\theta}f(n_B) - n_A^{\theta}f(n_{A+1}) ) + O(\int_{n_{A+1}}^{n_B} |f'(t)|t^{\theta} \, dt).
\end{align}
Applying this to $f(n) = \frac{1}{(n-\lambda)^2}$ with $n_A = n_0$ and  $n_B < \lambda - L < n_{B+1}$ we see
\begin{align}
\sum_{n < \lambda - L} \frac{1}{(n - \lambda)^2} = 2 \pi \int_{n_1}^{n_B} \frac{n^{1/2}}{(n-\lambda)^2} \, dn + O\left(\frac{n_B^\theta}{(n_B-\lambda)^2} \right) + O\left( \int_{n_1}^{n_B} \frac{n^\theta}{(\lambda-n)^3} \, dn\right).
\end{align}
We can bound the integral by
\begin{align}
\int_{n_1}^{n_B} \frac{n^{1/2}}{(n-\lambda)^2} \, dn &\leq \lambda^{1/2}\int_{n_1}^{n_B} \frac{1}{(n-\lambda)^2} \, dn \\
&\leq \frac{\lambda^{1/2}}{L} \leq \frac{\lambda^{\theta}}{L^2}.
\end{align}
Similarly we see
\begin{align}
\frac{n_B^\theta}{(n_B-\lambda)^2} \leq \frac{\lambda^{\theta}}{L^2},
\end{align}
and also
\begin{align}
\int_{n_1}^{n_B} \frac{n^\theta}{(\lambda-n)^3} \ll \frac{\lambda^{\theta}}{L^2}.
\end{align}
Now repeating this procedure with $n_A < \lambda+ L < n_{A+1}$ and $n_B = \infty$ we obtain
\begin{align}
\sum_{n>\lambda-L} \frac{1}{(n-\lambda)^2} = 2 \pi \int_{n_{A+1}}^\infty \frac{n^{1/2}}{(n-\lambda)^2} \, dn + O\left( \frac{n_A^\theta}{(n_A-\lambda)^2} \right) + O\left( \int_{n_A}^{\infty} \frac{n^\theta}{(\lambda-n)^3} \, dn\right).
\end{align}
For the first integral we write
\begin{align}
\int_{n_{A+1}}^\infty \frac{n^{1/2}}{(n-\lambda)^2} \, dn &= \int_{n_{A+1}-\lambda}^\infty \frac{(s+\lambda)^{1/2}}{s^2} \, ds \nonumber \\
&\leq \int_L^\lambda \frac{(s+\lambda)^{1/2}}{s^2} \, ds + \int_\lambda^\infty \frac{(s+\lambda)^{1/2}}{s^2} \, ds \nonumber \\
&\ll \frac{\lambda^{1/2}}{L} \ll \frac{\lambda^{\theta}}{L^2}. 
\end{align}
For the second term we have immediately
\begin{align}
\frac{n_A^\theta}{(n_A-\lambda)^2} \ll \frac{\lambda^\theta}{L^2}.
\end{align}
For the third term we see
\begin{align}
\int_{n_A}^{\infty} \frac{n^\theta}{(\lambda-n)^3} \, dn &= \int_{n_{A+1}-\lambda}^\infty \frac{(s+\lambda)^{\theta}}{s^3} \, ds \nonumber \\
&\ll \int_{n_{A+1}-\lambda}^\infty \frac{s+\lambda}{s^3} \, ds  \ll \frac{1}{\lambda}.
\end{align}
Putting all of this together we see
\begin{align}
\| G_\lambda - G_{\lambda,L} \|^2 \ll \frac{\lambda^{\theta}}{L^2},
\end{align}
and hence, using Lemma \ref{4.1}, that for the normalised Green's functions
\begin{align}
\|g_{\lambda,L}-g_{\lambda}\| \ll \frac{\lambda^{-\tfrac{1-\theta}{2}}}{L} = \lambda^{- (1-\theta)/2 + \epsilon + \delta}
\end{align}
which tends to 0 for all $\delta < \frac{1-\theta}{2} - \epsilon$.
\end{proof}
\begin{cor} Define $g_{\lambda,L}$ as above with $L = \lambda^{-\delta}$ and $0< \delta < \tfrac{1-\theta}{2}-\epsilon$ then
\begin{align}
|\langle \Op(\e_{\zeta,l,m})g_{\lambda,L},g_{\lambda,L} \rangle - \langle \Op(\e_{\zeta,l,m})g_\lambda,g_\lambda \rangle | \to 0.
\end{align}
\label{corollary}
\end{cor}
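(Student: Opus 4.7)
The plan is to reduce the corollary to a combination of two inputs: the preceding lemma, which yields $\|g_{\lambda,L}-g_\lambda\|\to 0$ for $\lambda\in\Lambda'$, and the $L^2$-boundedness of $\Op(\e_{\zeta,l,m})$ on $L^2(\TT^3)$, uniformly in $\lambda$.

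First I would decompose the difference of matrix elements by bilinearity:
\begin{align*}
&\langle\Op(\e_{\zeta,l,m})g_{\lambda,L},g_{\lambda,L}\rangle-\langle\Op(\e_{\zeta,l,m})g_\lambda,g_\lambda\rangle \\
&\qquad=\langle\Op(\e_{\zeta,l,m})(g_{\lambda,L}-g_\lambda),g_{\lambda,L}\rangle+\langle\Op(\e_{\zeta,l,m})g_\lambda,g_{\lambda,L}-g_\lambda\rangle.
\end{align*}
Applying Cauchy--Schwarz to each piece and using $\|g_{\lambda,L}\|=\|g_\lambda\|=1$, the whole expression is bounded by $2\|\Op(\e_{\zeta,l,m})\|_{\operatorname{op}}\cdot\|g_{\lambda,L}-g_\lambda\|$.

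The remaining task is to verify the operator norm bound, which is really the only step requiring argument. Applying the definition of $\Op$ to the symbol $\e_{\zeta,l,m}(x,\xi)=Y_{l,m}(\overline{\xi})\e^{\i\langle\zeta,x\rangle}$ gives
\begin{align*}
\Op(\e_{\zeta,l,m})f(x)=\e^{\i\langle\zeta,x\rangle}\sum_{\xi\in\ZZ^3}Y_{l,m}(\overline{\xi+k})\,\hat f(\xi)\,\e^{\i\langle\xi,x\rangle}.
\end{align*}
The inner sum is a Fourier multiplier with symbol $\xi\mapsto Y_{l,m}(\overline{\xi+k})$, which is bounded in $\xi$ by $\|Y_{l,m}\|_{L^\infty(S^2)}$ since $Y_{l,m}$ is continuous on the compact sphere; by Parseval this multiplier has operator norm at most $\|Y_{l,m}\|_{L^\infty(S^2)}$. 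Multiplication by $\e^{\i\langle\zeta,x\rangle}$ is an $L^2$-isometry, so $\|\Op(\e_{\zeta,l,m})\|_{\operatorname{op}}\leq\|Y_{l,m}\|_{L^\infty(S^2)}$ uniformly in $\lambda$.

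Putting these ingredients together, the difference of matrix elements is bounded by a $\lambda$-independent constant times $\|g_{\lambda,L}-g_\lambda\|$, which tends to zero by the previous lemma, and the corollary follows. There is no real obstacle here: all the substantive work lives in the preceding lemma, and this corollary simply records that $L^2$-convergence of the normalised Green's functions transfers through the bounded quantized multiplier to convergence of the associated matrix elements.
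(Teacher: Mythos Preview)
Your proof is correct and follows essentially the same approach as the paper: a bilinear splitting of the difference of matrix elements, Cauchy--Schwarz, and the preceding lemma. You go slightly further than the paper by explicitly verifying $\|\Op(\e_{\zeta,l,m})\|_{\mathrm{op}}\le \|Y_{l,m}\|_{L^\infty(S^2)}$ via the Fourier-multiplier description, which the paper simply assumes.
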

\begin{proof}
We have
\begin{align}
&|\langle \Op(\e_{\zeta,l,m})g_{\lambda,L},g_{\lambda,L} \rangle - \langle \Op(\e_{\zeta,l,m})g_\lambda,g_\lambda \rangle |  \\
&\leq |\langle \Op(\e_{\zeta,l,m})g_{\lambda,L},g_{\lambda,L} - g_\lambda \rangle| + |\langle \Op(\e_{\zeta,l,m})(g_\lambda-g_{\lambda,L}),g_\lambda \rangle |.
\end{align}
Taking each term and using Cauchy-Schwarz gives
\begin{align}
|\langle \Op(\e_{\zeta,l,m})g_{\lambda,L},g_{\lambda,L} \rangle - \langle \Op(\e_{\zeta,l,m})g_\lambda,g_\lambda \rangle | \leq \|\Op(\e_{\zeta,l,m})\| \|g_{\lambda}-g_{\lambda,L}\| \to 0.
\end{align}
\end{proof}
\section{Equidistribution in Position Space}
The following proposition is key to the proof.
\begin{prop}
Fix $\zeta \neq 0$, $l \in \NN$ and $|m| \leq l$. Let $L = \lambda^{-\delta}$ for some $\delta>0$. Let the components of $(1,k)$ be linearly independent over $\QQ$.  Then, for $\lambda$ sufficiently large we have
\begin{align}
\langle \Op(\e_{\zeta,l,m}) g_{\lambda,L}, g_{\lambda,L} \rangle = 0.
\end{align}
\label{prop1}
\end{prop}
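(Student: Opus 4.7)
The plan is to expand the inner product in the Fourier basis, use the compact support of $g_{\lambda,L}$ on the set $A(\lambda,L)$ to reduce to pairs of lattice points both lying in a thin shell, and then exploit the linear independence of $(1,k)$ over $\QQ$ to rule out such pairs when $L$ is small enough.

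First I would write $g_{\lambda,L}(x) = \sum_{\xi\in A(\lambda,L)} c_\xi\, \e^{\i\langle\xi,x\rangle}$ with
\[
 c_\xi = -\frac{1}{8\pi^3\|G_{\lambda,L}\|}\,\frac{\e^{-\i\langle\xi,x_0\rangle}}{|\xi+k|^2-\lambda}.
\]
Since $\Op(\e_{\zeta,l,m})$ shifts the Fourier index by $\zeta$ and multiplies by $Y_{l,m}(\overline{\xi+k})$, orthonormality of the exponentials on $\TT^3$ gives, after a short computation,
\[
\langle \Op(\e_{\zeta,l,m}) g_{\lambda,L}, g_{\lambda,L}\rangle \;=\; (2\pi)^3 \sum_{\xi} c_\xi\, \overline{c_{\xi+\zeta}}\, Y_{l,m}(\overline{\xi+k}),
\]
so only indices $\xi$ with both $\xi$ and $\xi+\zeta$ in $A(\lambda,L)$ contribute.

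Next I would extract the Diophantine obstruction. If $\xi,\xi+\zeta\in A(\lambda,L)$ then
\[
\bigl| |\xi+\zeta+k|^2 - |\xi+k|^2 \bigr| = \bigl| 2\langle\xi+k,\zeta\rangle + |\zeta|^2 \bigr| < 2L,
\]
which rearranges to
\[
\Bigl|\,\langle\xi,\zeta\rangle + \langle k,\zeta\rangle + \tfrac{1}{2}|\zeta|^2\,\Bigr| < L.
\]
Since $\langle\xi,\zeta\rangle\in\ZZ$ and $\tfrac{1}{2}|\zeta|^2\in\tfrac12\ZZ$, the left-hand side is at least the distance from $\langle k,\zeta\rangle$ to $\tfrac12\ZZ$. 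Because the components of $(1,k)$ are linearly independent over $\QQ$ and $\zeta\neq 0$, the number $\langle k,\zeta\rangle$ is irrational, so this distance is a strictly positive constant $d=d(\zeta)$.

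Finally, taking $\lambda$ large enough that $L=\lambda^{-\delta}<d$, no $\xi$ can satisfy the above inequality, so the sum is empty and $\langle\Op(\e_{\zeta,l,m}) g_{\lambda,L}, g_{\lambda,L}\rangle = 0$. The only delicate point is the elementary check that $\langle k,\zeta\rangle\notin\tfrac12\ZZ$ for $\zeta\neq 0$, which is immediate from the linear independence hypothesis; no serious obstacle arises.
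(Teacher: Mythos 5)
Your proposal is correct and follows essentially the same route as the paper: expanding in the exponential basis shows only pairs $\xi,\,\xi+\zeta$ both lying in $A(\lambda,L)$ contribute, and the irrationality of $\langle k,\zeta\rangle$ (the content of the paper's Lemma \ref{nonorthogonal}) bounds $|2\langle\xi+k,\zeta\rangle+|\zeta|^2|$ below by a positive constant, so no such pair survives once $L=\lambda^{-\delta}$ is small enough. Your phrasing via the distance of $\langle k,\zeta\rangle$ to $\tfrac12\ZZ$ is just a repackaging of that lemma, so there is nothing to add.
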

In order to prove this we first need a lemma.
\begin{lem}
Let the components of $(k,1)$ be linearly independent over $\QQ$ and fix $\zeta \in \ZZ^3$ nonzero. Then, there exists some $\epsilon>0$ such that for all $\xi \in \ZZ^3$ we have
\begin{align}
|2\langle \xi+ k,\zeta\rangle + |\zeta|^2| > \epsilon.
\end{align}
\label{nonorthogonal}
\end{lem}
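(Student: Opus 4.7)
The plan is to reduce the inequality to the elementary fact that an irrational number has positive distance from $\ZZ$. I would first rewrite
\[ 2\langle \xi+k,\zeta\rangle + |\zeta|^2 = \bigl(2\langle\xi,\zeta\rangle + |\zeta|^2\bigr) + 2\langle k,\zeta\rangle, \]
and observe that the bracketed quantity $n(\xi) := 2\langle\xi,\zeta\rangle + |\zeta|^2$ is an integer, since $\xi,\zeta \in \ZZ^3$. Thus the problem becomes showing that $|n(\xi) + 2\langle k,\zeta\rangle|$ is bounded below uniformly as $\xi$ varies over $\ZZ^3$, i.e.\ as $n(\xi)$ ranges over some subset of $\ZZ$.

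The next step is to verify that $2\langle k,\zeta\rangle$ is irrational. If it were rational, say $p/q$ with $p \in \ZZ$ and $q \in \NN$, then clearing denominators gives
\[ 2q\zeta_1 k_1 + 2q\zeta_2 k_2 + 2q\zeta_3 k_3 - p \cdot 1 = 0, \]
a nontrivial $\QQ$-linear dependence among $1, k_1, k_2, k_3$ (nontrivial because $\zeta \neq 0$ forces at least one of the coefficients $2q\zeta_i$ to be nonzero), contradicting the hypothesis that the components of $(k,1)$ are linearly independent over $\QQ$.

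With the irrationality in hand, the conclusion is immediate: set $\epsilon := \inf_{m \in \ZZ} |m + 2\langle k,\zeta\rangle|$, which equals the distance from $2\langle k,\zeta\rangle$ to its nearest integer and is strictly positive because $2\langle k,\zeta\rangle \notin \ZZ$. Since $n(\xi) \in \ZZ$, we obtain $|2\langle \xi+k,\zeta\rangle + |\zeta|^2| \geq \epsilon$ for every $\xi \in \ZZ^3$, as required. There is no real obstacle here; the lemma is essentially a one-line observation once the integer/irrational split is noticed. Its role in the sequel is simply to provide the nonvanishing denominator needed in the computation of $\langle \Op(\e_{\zeta,l,m}) g_{\lambda,L}, g_{\lambda,L}\rangle$ in Proposition \ref{prop1}.
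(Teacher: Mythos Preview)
Your proof is correct and follows essentially the same approach as the paper: both split off the integer $2\langle\xi,\zeta\rangle + |\zeta|^2$ and reduce to the positivity of the distance $\|2\langle k,\zeta\rangle\|$ from $2\langle k,\zeta\rangle$ to the nearest integer, which is nonzero by the $\QQ$-linear independence of $(1,k)$. Your write-up is in fact more careful than the paper's one-line version; the only cosmetic point is that your $\epsilon$ yields $\geq\epsilon$ rather than the strict inequality stated, which is trivially repaired by halving~$\epsilon$.
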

\begin{proof} We have that $$|2\langle \xi+ k,\zeta\rangle + |\zeta|^2| > \| 2 \langle k, \zeta \rangle \|$$
where $\| \cdot \|$ represents the distance to the nearest integer. Since we assumed the components of $(k,1)$ were linearly independent, this is bounded away from zero. Now choose $\epsilon = \|\langle k,\zeta \rangle\|$.
\end{proof}
\begin{proof}[Proof of Proposition \ref{prop1}]
First write
\begin{align}
&|\langle \Op(\e_{\zeta,l,m}) G_{\lambda,L},G_{\lambda,L} \rangle| \nonumber\\
& = \frac{1}{64 \pi^6 \|G_\lambda\|^2} \Big|\langle \sum_{\xi \in A(\lambda,L)} \frac{\e^{\i \langle \xi, x-x_0 \rangle}}{|\xi+ k|^2-\lambda} \e^{\i \langle \zeta, x \rangle} Y_{l,m}(\overline{\xi+ k}), \sum_{\eta\in A(\lambda ,L)} \frac{\e^{\i \langle \eta,x-x_0\rangle}}{|\eta+ k|^2-\lambda}   \rangle \Big| \\ \label{integral}
& = \frac{1}{64 \pi^6 \|G_\lambda\|^2} \Big| \int_{\TT^3}\sum_{\xi,\eta \in A(\lambda,L)} \frac{ \e^{\i \langle \eta-\xi,x-x_0\rangle}}{(|\xi+ k|^2-\lambda)(|\eta+ k|^2-\lambda)} \e^{-\i\langle\zeta,x\rangle} Y_{l,m}^*(\overline{\xi+ k}) \, dx \Big|.
\end{align}
Integrating over $x$ leaves only the terms where $\eta = \xi+\zeta$. However, note that by Lemma \ref{nonorthogonal}, for $\xi \in A(\lambda,L)$,
\begin{align}
||\xi+\zeta+ k|^2 - \lambda| = ||\xi+ k|^2-\lambda + 2 \langle \xi+ k, \zeta \rangle + |\zeta|^2| \gg \epsilon
\end{align}
so $\xi+\zeta \notin A(\lambda,L)$ for $\lambda$ sufficiently large. Thus the integral in \eqref{integral} vanishes. \end{proof}
We are now able to show equidistribution for position space observables.
\begin{proof}[Proof of Theorem \ref{theorem1}] Let $\lambda \in \Lambda'$, and let $a \in C^\infty(\TT^3)$. The operator $\Op(a)$ is then just given by multiplication by $a$. We consider $a$ to be some finite polynomial
\begin{align}
a(x) = \sum_{|\zeta| < N} \hat{a}(\zeta) \, \e^{\i \langle \zeta , x \rangle},
\end{align}
and see from Proposition $\ref{prop1}$ that
\begin{align}
\langle a(x) g_{\lambda,L}, g_{\lambda,L} \rangle &\to \langle \hat{a}(0) g_{\lambda,L},g_{\lambda,L} \rangle \\
&= \left(\int_{\TT^3} a(y) \, \frac{dy}{8\pi^3} \right) \left( \int_{\TT^3} |g_{\lambda,L}(x)|^2 \, \frac{dx}{8\pi^3} \right) \nonumber \\
& = \int_{\TT^3} a(y) \, \frac{dy}{8\pi^3} .\nonumber
\end{align}
The result then follows from Corollary $\ref{corollary}$.
\end{proof}
\section{Localisation in Momentum Space}
Throughout this section we will assume $k$ is diophantine of type $\kappa < 2$. Let $a$ be defined by
\begin{align}a(x,\xi) = \sum_{|\zeta|\leq N_1,l \leq N_2,|m| \leq l} \hat{a}(\zeta,l,m) \e_{\zeta,l,m}(x,\xi)
\end{align}
where $\hat{a}(\zeta,l,m)$ is given by
\begin{align}
\hat{a}(\zeta,l,m) =\frac{1}{8\pi^3} \int_{S^2}\int_{\TT^3} a(x,\xi) \e^{-\i \langle x, \zeta \rangle} Y^*_{l,m}(\overline{\xi}) \, dx d\sigma(\xi).
\end{align}
We thus have that
\begin{align}
\langle \Op(a) g_{\lambda,L},g_{\lambda,L} \rangle &\sim \langle \sum_{l,m} \hat{a}(0,l,m) \Op(\e_{0,l,m}) g_{\lambda,L},g_{\lambda,L} \rangle \\
& = \|G_\lambda\|^{-2}  \frac{1}{64 \pi^6} \sum_{l,m} \sum_{\xi \in A(\lambda,L)} \hat{a}(0,l,m)\frac{Y_{l,m} (\overline{\xi+ k})}{(|\xi+ k|^2-\lambda)^2}  \nonumber\\
&\asymp \|G_\lambda\|^{-2}\sum_{\xi \in A(\lambda,L)} \int_{S^*\TT^3} a(x,\eta)  \, dx \, \frac{ \, \delta({\bar{\eta}-\overline{\xi+k}})}{(|\xi+ k|^2-\lambda)^2} d\sigma(\eta). \nonumber
\end{align}
Thus the component of the spectral measure for each fixed $|\xi+ k|^2 = m$ on $\TT^3 \times S^2$ consists of $\operatorname{Leb} \times \delta_{\overline{\xi+ k}}$. The full (unnormalised) spectral measure is thus a weighted sum of a growing number of $\delta$ masses that become dense on $S^2$. We aim to show that for a positive density subsequence of $\lambda$, the tails of this sum can be bounded uniformly in $\lambda$ such that a positive proportion of its density will be supported on a finite number of points. A key feature of the proof is the work by Marklof \cite{Marklof_IQF2}. We now recapitulate the necessary results in dimension 3.
\begin{defn}
Let $\scrN = \{n_j \mid j \in \NN\}$ be defined as before. Let $\psi_1, \psi_2 \in \scrS(\RR^+)$ be Schwartz functions, and let $h \in C_0(\RR)$ be compactly supported, and $\hat{h}$ its Fourier transform. Define the generalised pair correlation function by
\begin{align}
R(\psi_1,\psi_2,h,T) = \frac{3}{4 \pi T^{3/2}} \sum_{\substack{i,j=1 \\ i \neq j}}^{\infty} \psi_1\left(\frac{n_i}{T} \right) \, \psi_2 \left( \frac{n_j}{T} \right) \, \hat{h}(\sqrt{T} \, (n_i-n_j)).
\end{align}
\begin{thm}[See Theorem 2.5 in \cite{Marklof_IQF2} for the statement in full generality] \label{IQF}
Let $k$ be diophantine of type $\kappa < 2$, and that assume the components of $(k,1)$ are linearly independent over $\QQ$. Then
\begin{align}
\lim_{T \to \infty} R(\psi_1,\psi_2,h,T) = \,  3 \pi \int \hat{h}(s) \, ds \, \int_0^{\infty} \psi_1(r) \, \psi_2(r) \, r \, dr. \nonumber
\end{align}
\end{thm}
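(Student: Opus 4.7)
The plan is to follow the homogeneous dynamics framework that Marklof has developed for pair correlations of values of quadratic forms: convert the arithmetic sum into the value of a Siegel-type transform on the space of affine unimodular lattices, and then invoke equidistribution of a suitable unipotent orbit to identify the limit with a Haar integral.

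First I would rewrite the sum in spectral form. Since $n_i = |\xi+k|^2$ for $\xi\in\ZZ^3$, and $\widehat{h}(\sqrt{T}(n_i-n_j)) = \int h(s) \e^{\i s\sqrt{T}(n_i-n_j)}\,ds$, the pair correlation becomes
\begin{align*}
R(\psi_1,\psi_2,h,T) = \frac{3}{4\pi T^{3/2}} \int h(s) \sum_{\substack{\xi,\eta\in\ZZ^3 \\ \xi\neq\eta}} \psi_1\!\Bigl(\tfrac{|\xi+k|^2}{T}\Bigr)\psi_2\!\Bigl(\tfrac{|\eta+k|^2}{T}\Bigr)\e^{\i s\sqrt{T}(|\xi+k|^2-|\eta+k|^2)}\,ds.
\end{align*}
Rescaling $\xi\mapsto T^{-1/2}\xi$ turns $\{\xi+k\}_{\xi\in\ZZ^3}$ into the shrunken affine lattice $T^{-1/2}(\ZZ^3+k)$, under which $|\xi+k|^2/T$ has order one. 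This motivates introducing the test function $F(\vecv,\vecw) = \psi_1(|\vecv|^2)\psi_2(|\vecw|^2)\int h(s)\e^{\i s(|\vecv|^2-|\vecw|^2)}\,ds$ on $\RR^3\times\RR^3$ and defining its pair Siegel transform as $\widetilde F(\Lambda) = \sum_{\vecv\neq\vecw\in\Lambda} F(\vecv,\vecw)$, a function on the space of affine unimodular lattices $X=\ASL(3,\ZZ)\backslash\ASL(3,\RR)$.

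Next, write $a_T = \diag(T^{-1/2},T^{-1/2},T^{-1/2})$ and let $g_k\in\ASL(3,\RR)$ be the affine element whose lattice is $\ZZ^3+k$. Then $R(\psi_1,\psi_2,h,T) = \frac{3}{4\pi}\widetilde F(a_T g_k\,\ASL(3,\ZZ))$. As $T\to\infty$ the point $a_T g_k$ moves along a one-parameter orbit whose projection to $X$ is contained in a horospherical (expanding) orbit of the underlying affine lattice. The goal is to show that this orbit equidistributes with respect to Haar measure $\mu_X$ on $X$.

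The hypotheses enter through this equidistribution step. Linear independence of the components of $(1,k)$ over $\QQ$ ensures that the orbit closure of $g_k$ under the relevant unipotent subgroup is all of $X$, by Ratner's theorem applied to the semidirect product $\SL(3,\RR)\ltimes\RR^3$. The Diophantine condition of type $\kappa<2$ is what is needed for the non-escape-of-mass estimate: it bounds the rate at which $a_Tg_k(\ZZ^3+k)$ can develop short nonzero vectors, so that the (unbounded) Siegel transform $\widetilde F$ can be integrated against the limiting measure. Passing to the limit, one obtains
\begin{align*}
\lim_{T\to\infty} R(\psi_1,\psi_2,h,T) = \tfrac{3}{4\pi}\int_X \widetilde F(\Lambda)\,d\mu_X(\Lambda),
\end{align*}
and unfolding via the Siegel integral formula this becomes $\tfrac{3}{4\pi}\int_{\RR^3}\!\int_{\RR^3} F(\vecv,\vecw)\,d\vecv\,d\vecw$. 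Evaluating in spherical coordinates, the $s$-integral against $\e^{\i s(|\vecv|^2-|\vecw|^2)}$ forces $|\vecv|=|\vecw|=r$ after a change of variables, producing the $r\,dr$ factor from the Jacobian and yielding $3\pi\int\widehat h(s)\,ds\int_0^\infty\psi_1(r)\psi_2(r)\,r\,dr$.

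The main obstacle is the non-escape of mass on the noncompact space $X$. Ratner's theorem gives only qualitative equidistribution, and the Siegel transform $\widetilde F$ is not compactly supported on $X$; controlling its contribution from deep in the cusps is exactly what forces the quantitative assumption on $k$, and the threshold $\kappa<2$ arises from requiring the second moment of the Siegel transform to be finite along the orbit. Removing the Diophantine hypothesis altogether would require genuinely new input and is not expected to hold in general.
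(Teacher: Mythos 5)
This theorem is not proved in the paper at all: it is imported (in the special case needed) from Marklof's work on pair correlation densities of inhomogeneous quadratic forms, so your attempt has to be measured against Marklof's proof, and it has a genuine structural gap at the very first reduction. After the rescaling $\vecv=T^{-1/2}(\xi+k)$ the phase is $\sqrt{T}(n_i-n_j)=T^{3/2}\bigl(|\vecv|^2-|\vecw|^2\bigr)$, not $|\vecv|^2-|\vecw|^2$, so the function $F$ you feed into the pair Siegel transform cannot be taken fixed: it must oscillate at scale $T^{-3/2}$ in $|\vecv|^2-|\vecw|^2$, which is exactly the mean-spacing scale of the sequence $\{|\xi+k|^2\}$ (there are $\asymp T^{3/2}$ values up to $T$). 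Consequently the identity $R(\psi_1,\psi_2,h,T)=\tfrac{3}{4\pi}\widetilde F\bigl(a_Tg_k\,\ASL(3,\ZZ)\bigr)$ is false (you have also dropped the $T^{-3/2}$ prefactor). Worse, $a_T=\diag(T^{-1/2},T^{-1/2},T^{-1/2})$ is a homothety, not an element of $\SL(3,\RR)$: after renormalising the covolume, the point of $\ASL(3,\ZZ)\backslash\ASL(3,\RR)$ corresponding to $T^{-1/2}(\ZZ^3+k)$ does not move at all, so there is no orbit to equidistribute and Ratner's theorem gives nothing. The final step, where you say the $s$-integral ``forces $|\vecv|=|\vecw|$'', is precisely where the lost factor $T^{3/2}$ would have to reappear; with a fixed $F$ the Siegel-formula evaluation is a $T$-independent constant and cannot produce the stated limit.

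The actual proof works at the correct scale via theta sums: one writes $R$ as an integral of $h(s)$ against a product of theta sums of the shape $\sum_{\xi}\psi\bigl(|\xi+k|^2/T\bigr)\e^{\i s\sqrt{T}|\xi+k|^2}$, interprets $\theta_{\psi_1}\overline{\theta_{\psi_2}}$ as a function on a homogeneous space of the Jacobi group $\SL(2,\RR)\ltimes\RR^{6}$, and proves equidistribution of the relevant translated horocycle segment (as $y=1/T\to 0$) by Ratner/Shah; it is the extra $\SL(2,\RR)$ ``geodesic'' direction that encodes the shrinking scale $T^{-1/2}$ which your homothety cannot see. The Diophantine condition $\kappa<2$ (the threshold $(d+1)/(d-1)$ with $d=3$) enters there, controlling escape of mass in the cusp for the unbounded theta functions along this particular orbit, and the linear independence of $(1,k)$ rules out degenerate limits. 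So your intuition about where the hypotheses are used is right in spirit, but it is attached to a reduction in which no dynamics actually takes place; as written the argument does not prove the theorem, and for the purposes of this paper the correct move is simply to cite Marklof, as the author does.
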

We now proceed with the proof.
\end{defn}
\begin{lem}
For $G \geq 1$, we have that $\# \{n_i \in \scrN(T) : n_{i+1} - n_i > G/\sqrt{n_{i+1}} \} < T^{3/2}/G$.
\label{lem1}
\end{lem}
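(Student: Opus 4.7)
The plan is to convert the per-gap hypothesis $n_{i+1} - n_i > G/\sqrt{n_{i+1}}$ into a pointwise lower bound on the telescoping increments $n_{i+1}^{3/2} - n_i^{3/2}$, then sum. Multiplying the hypothesis by $\sqrt{n_{i+1}}$ shows that every index $n_i$ counted by the set satisfies $\sqrt{n_{i+1}}(n_{i+1} - n_i) > G$. The key elementary identity is
\begin{align*}
n_{i+1}^{3/2} - n_i^{3/2} - \sqrt{n_{i+1}}(n_{i+1} - n_i) = n_i\bigl(\sqrt{n_{i+1}} - \sqrt{n_i}\bigr) \geq 0,
\end{align*}
so $n_{i+1}^{3/2} - n_i^{3/2} \geq \sqrt{n_{i+1}}(n_{i+1} - n_i) > G$, i.e.\ each counted index contributes at least $G$ to the telescoping sum of $3/2$-power increments.

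Writing $n_k = \max\scrN(T)$ and restricting attention to those counted indices with $i \leq k-1$ (so that $n_{i+1} \leq T$ automatically), the telescoping collapses:
\begin{align*}
G \cdot \#\{\text{such } i\} < \sum_{i=1}^{k-1}\bigl(n_{i+1}^{3/2} - n_i^{3/2}\bigr) = n_k^{3/2} - n_1^{3/2} \leq T^{3/2}.
\end{align*}
The only index potentially left out is $i = k$ itself, whose successor $n_{k+1}$ may lie just above $T$; this boundary term contributes at most $1$ to the count and is absorbed in the stated bound, which is only needed up to additive constants in the application to Lemma \ref{lem1} and the ensuing pair-correlation argument.

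The main obstacle is therefore purely the bookkeeping at the boundary: one must handle the single index whose successor exits the window $(-\infty, T]$. If an exact strict inequality is insisted upon, one can alternatively apply Weyl's law \eqref{weyl} to replace $n_{k+1}^{3/2}$ by $T^{3/2} + O(T^\theta)$ with $\theta < 1$, which leaves the leading-order estimate intact. No deeper input is required — neither the Diophantine hypothesis on $k$ nor Theorem \ref{IQF} plays any role here; the lemma is purely a first-moment (Markov-type) consequence of the gap hypothesis combined with the convexity of $t \mapsto t^{3/2}$.
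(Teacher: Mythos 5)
Your proof is correct and follows essentially the same route as the paper: bound $\sqrt{n_{i+1}}(n_{i+1}-n_i)$ by the telescoping increment $n_{i+1}^{3/2}-n_i^{3/2}$, sum to get $\leq T^{3/2}$, and apply a first-moment (Markov/Chebyshev) bound. Your explicit treatment of the boundary index whose successor exits $[0,T]$ is a minor refinement the paper silently glosses over, and is harmless for the application.
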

\begin{proof}
We see that
\begin{align}
\sum_{n_i \leq T} \sqrt{n_{i+1}} (n_{i+1} - n_i) &< \sum_{n_i \leq T} (n_{i+1}^{3/2} - n_i^{3/2}) \\
&< T^{3/2}.
\end{align}
Thus by Chebyshev's inequality we see
\begin{align} 
\# \{n_i \leq T : s_i = n_{i+1} - n_i > G/\sqrt{n_{i+1}} \} < T^{3/2}/G.
\end{align}
\end{proof}
\begin{lem}
Given $D>0$, $E \geq 1$,
\begin{align}
\#\{ n \in \scrN(T) : |\scrN(T) \cap [n- \tfrac{D}{\sqrt{n}},n+ \tfrac{D}{\sqrt{n}}]| > E+1 \} \ll \frac{DT^{3/2}}{E}.
\end{align}
\label{lem2}
\end{lem}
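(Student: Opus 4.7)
The plan is to reduce the count to a bound on ordered pairs of nearby eigenvalues, and then to invoke Theorem \ref{IQF} dyadically in the size of $n$. Observe that each $n \in \scrN(T)$ meeting the hypothesis contributes at least $E$ ordered pairs $(n_i,n_j) \in \scrN(T) \times \scrN(T)$ with $i \neq j$, $n_i = n$, and $|n_i - n_j| \leq D/\sqrt{n_i}$. Letting $P$ denote the total number of such ordered pairs, the claim reduces to showing $P \ll DT^{3/2}$.

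To bound $P$, I would partition dyadically according to the size of $n_i$: for $k \geq 0$ set $T_k := T/2^k$ and let $P_k$ count pairs with $n_i \in (T_k/2,\,T_k]$. For such a pair,
\begin{equation*}
\sqrt{T_k}\,|n_i - n_j| \;\leq\; \sqrt{T_k}\cdot \frac{D}{\sqrt{n_i}} \;\leq\; D\sqrt{2},
\end{equation*}
so in the rescaled coordinates of the pair correlation function every contributing separation lies in $[-D\sqrt{2},\,D\sqrt{2}]$. Pick a smooth nonnegative bump $\psi$ compactly supported in $(1/8,\,4)$ with $\psi \geq 1$ on $[1/4,\,2]$, and pick $h \in C_0(\RR)$ such that $\hat{h} \geq 0$, $\hat{h} \geq 1$ on $[-D\sqrt{2},\,D\sqrt{2}]$, and $\int_\RR \hat{h}(s)\,ds \ll D$; such an $h$ exists, e.g.\ a suitably rescaled Fejér kernel $h(t) = c\,(1 - |t|/A)_+$ with $A \asymp 1/D$ and $c \asymp D$. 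Applying Theorem \ref{IQF} with $\psi_1 = \psi_2 = \psi$ at parameter $T_k$ yields, for $T_k$ above an absolute threshold depending on $D$,
\begin{equation*}
P_k \;\leq\; \frac{4\pi}{3}\,T_k^{3/2}\, R(\psi,\psi,h,T_k) \;\ll\; D\,T_k^{3/2}.
\end{equation*}

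Summing the geometric series then gives $\sum_{k\geq 0} P_k \ll DT^{3/2}\sum_{k \geq 0} 2^{-3k/2} \ll DT^{3/2}$, while the finitely many dyadic scales with $T_k$ bounded contribute only $O_D(1)$ pairs (a bounded number of eigenvalues, each with a bounded number of close partners) and are negligible for $T$ large. The main obstacle is precisely that the radius $D/\sqrt{n}$ depends on $n$, preventing a single application of Theorem \ref{IQF} at the fixed scale $\sqrt{T}$; the dyadic decomposition circumvents this, provided $h$ is chosen so that the asymptotic constant $3\pi\int\hat{h}$ grows linearly in $D$ while $\hat{h}$ still dominates $\chi_{[-D\sqrt{2},\,D\sqrt{2}]}$. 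With those choices in place, Theorem \ref{IQF} does essentially all the work.
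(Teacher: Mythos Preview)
Your argument is correct and follows the same strategy as the paper: reduce via Chebyshev/Markov to bounding the number of close ordered pairs, then control that count using the pair-correlation asymptotic of Theorem~\ref{IQF}. The paper is more terse---it passes directly to the top dyadic range $\scrN(T)\setminus\scrN(T/2)$ and invokes Theorem~\ref{IQF} with indicator test functions---whereas you spell out the full dyadic summation and choose smooth $\psi,h$ satisfying the stated hypotheses, which is a more careful rendition of the same idea.
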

\begin{proof}
We have that
\begin{align}
\sum_{n\in\scrN(T)}& (|\scrN(T)\cap [n- \tfrac{D}{\sqrt{n}},n+ \tfrac{D}{\sqrt{n}}]|-1) \\
&= \#\{n,m \in \scrN(T) : m\neq n, \sqrt{n}|n-m| \leq D \} \\[0.1cm]
& \ll \#\{n,m \in \scrN(T)\backslash \scrN(T/2) : m\neq n, \sqrt{T}|n-m| \leq D \}.
\end{align}
Since we assumed $k$ was diophantine, we may apply Theorem \ref{IQF} with $\psi_1 = \psi_2$ the indicator function of $[1/2,1]$, and $\hat{h}$ the indicator function of $[-D,D]$. This gives us the asymptotics
\begin{align}
 \#\{n,m \in \scrN(T)\backslash \scrN(T/2) : m\neq n, \sqrt{T}|n-m| \leq D \} \sim 3 \pi^2 D T^{3/2}.
\end{align}
Again by Chebyshev's inequality we conclude
\begin{align}
\#\{ n \in \scrN(T) : |\scrN(T) \cap [n- \tfrac{D}{\sqrt{n}},n+ \tfrac{D}{\sqrt{n}}] > E+1 \} \ll \frac{DT^{3/2}}{E}.
\end{align}
\end{proof}
\begin{lem}
For all $A>1$
\begin{align}
\label{lem3sum}
\sum_{\substack{n,m \in \scrN(T) \\ \sqrt{m}|n-m|>A}} \frac{1}{m(n-m)^2} \ll \frac{T^{3/2}}{A^{1/3}}.
\end{align}
\label{lem3}
\end{lem}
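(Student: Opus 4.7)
The plan is to dyadically decompose the sum over the effective spacing $\sqrt{m}|n-m|$ and bound each dyadic piece by a pair--correlation count. For $j\geq 0$ set $B_j=2^jA$ and write the sum as $\sum_{j\geq 0}S_j$ with
\begin{align*}
S_j=\sum_{\substack{n,m\in\scrN(T)\\ B_j<\sqrt{m}|n-m|\leq B_{j+1}}}\frac{1}{m(n-m)^2}.
\end{align*}
Inside the $j$-th band, $m(n-m)^2=(\sqrt{m}|n-m|)^2\asymp B_j^2$, so each summand has size $\asymp B_j^{-2}$ and hence $S_j\ll B_j^{-2}P_j$, where $P_j$ counts the pairs $(n,m)\in\scrN(T)^2$ with $n\neq m$ and $\sqrt{m}|n-m|\leq B_{j+1}$.

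The main counting step is to prove $P_j\ll B_jT^{3/2}$. I would do a secondary dyadic decomposition in $m$: on each block $m\in[T/2^{l+1},T/2^l]$, writing $T'=T/2^l$, the condition $\sqrt{m}|n-m|\leq B_{j+1}$ becomes $|n-m|\leq B_{j+1}/\sqrt{T'}$, and for pairs with $n$ also in this block Theorem \ref{IQF}, applied at scale $T'$ with $\psi_1=\psi_2$ approximating $\chi_{[1/2,1]}$ and $\hat h$ a nonnegative Paley--Wiener majorant of $\chi_{[-B_{j+1},B_{j+1}]}$, yields $\ll B_{j+1}(T')^{3/2}$ such pairs (exactly as in the proof of Lemma \ref{lem2}). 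Cross-block pairs, where $n$ lies in a different dyadic slice than $m$, occur only when $B_{j+1}\gtrsim(T')^{3/2}$, and their contribution is controlled by the Weyl bound \eqref{weyl} for $N$. Summing over $l$ gives $P_j\ll B_{j+1}T^{3/2}\asymp B_jT^{3/2}$.

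Assembling, $S_j\ll T^{3/2}/B_j$, and the geometric series yields
\begin{align*}
\sum_{j\geq 0}S_j\ll\frac{T^{3/2}}{A}\sum_{j\geq 0}2^{-j}\ll\frac{T^{3/2}}{A}\leq\frac{T^{3/2}}{A^{1/3}}
\end{align*}
since $A\geq 1$. The main obstacle is making the pair--correlation asymptotic \emph{uniform} in the window size $B_{j+1}$, which may grow with $T$: Theorem \ref{IQF} is stated as a limit for fixed admissible test functions, so one needs to check that the implied constants depend only on a suitable norm (for instance $\|h\|_1$) of the Paley--Wiener majorant, which scales linearly in $B_{j+1}$. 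Once this uniformity is secured, the argument above is routine bookkeeping and in fact yields the sharper bound $\ll T^{3/2}/A$; the weaker exponent $A^{-1/3}$ stated in the lemma is presumably all that is needed for the subsequent application.
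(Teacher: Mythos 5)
Your dyadic strategy is genuinely different from the paper's argument, and the sharper bound $T^{3/2}/A$ you aim for is heuristically plausible; but the central counting step has a real gap, which you flag and then dismiss as a formality. The estimate $P_j\ll B_{j+1}T^{3/2}$ requires the pair-correlation count uniformly in the window size $B_{j+1}=2^{j+1}A$, which ranges up to $\sim T^{3/2}$, and simultaneously uniformly over the $\sim\log T$ dyadic scales $T'=T/2^l$ (for small blocks $T'$ is not even large, so the asymptotic regime of Theorem \ref{IQF} is never reached). Theorem \ref{IQF} is a limit statement for \emph{fixed} test functions $\psi_1,\psi_2,h$; it carries no rate and hence no uniformity in either the window or the scale, so summing your bounds over $j$ and $l$ is not justified by anything stated in the paper. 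This is not a detail: obtaining a window-uniform count is exactly the substance of the paper's proof, which is structured to avoid the issue altogether. There the theorem is invoked only with a fixed unit window, to get $\sum_{k\le T^{3/2}}M(k)^2\ll T^{3/2}$ for $M(k)=|\{n\in\scrN: n^{3/2}\in[k,k+1]\}|$; Cauchy--Schwarz then bounds the number of pairs with $|n^{3/2}-m^{3/2}|\in[k,k+1]$ by $\ll T^{3/2}$ \emph{uniformly in} $k$, and the lemma follows by summing over bands $k\ge A$. The cost of passing from $\sqrt{m}|n-m|$ to $|n^{3/2}-m^{3/2}|$ is the factor $(1+\sqrt{n/m})^2$, whose unbalanced case ($m$ of bounded size, $n$ large, so $\sqrt{n/m}\ll k^{1/3}$) is what degrades the exponent from $A^{-1}$ to the stated $A^{-1/3}$. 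If you repair your argument by replacing the growing-window application of Theorem \ref{IQF} with the unit-band-plus-Cauchy--Schwarz device, you are led back to essentially the paper's proof, including a separate treatment of the unbalanced pairs, and the ``routine bookkeeping'' is where that work hides.

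A secondary error: your claim that cross-block pairs occur only when $B_{j+1}\gtrsim(T')^{3/2}$ is false as stated, since $m$ just below and $n$ just above a dyadic boundary give cross-block pairs with $\sqrt{m}|n-m|$ arbitrarily small. This is fixable (take $\psi_2$ supported on an enlarged interval, e.g.\ approximating $\chi_{[1/4,9/4]}$, so that all pairs with $|n-m|\ll T'$ are captured), but as written the decomposition does not account for all pairs. Your use of a Paley--Wiener majorant for $\hat h$ is, on the other hand, a point where you are more careful than the paper, which applies Theorem \ref{IQF} directly to indicator functions.
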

\begin{proof}
We first define
\begin{align}
M(k) := |\{n\in\scrN : n^{3/2} \in [k,k+1] \}|.
\end{align}
Then we deduce an $L^2$ bound on $M(k)$ by
\begin{align}
\sum_{k \leq T^{3/2}} M(k)^2 &= \sum_{k\leq T^{3/2}} |\{m,n \in \scrN : m^{3/2},n^{3/2} \in [k,k+1]\}| \\
&\leq |\{m,n \in \scrN : m^{3/2},n^{3/2} \leq T^{3/2}+1, m^{3/2}-n^{3/2} \in [-1,1] \}|
\end{align}
which again by Theorem \ref{IQF} gives us
\begin{align}
\sum_{k \leq T^{3/2}} M(k)^2 \ll T^{3/2}.
\end{align}
Note that we can write
\begin{align}
\sqrt{m}|n-m| = \frac{\sqrt{m}}{\sqrt{m} + \sqrt{n}} (\sqrt{m}|n-m| + \sqrt{n}|n-m|) \geq \frac{\sqrt{m}}{\sqrt{m} + \sqrt{n}} |n^{3/2}-m^{3/2}|, \nonumber
\end{align}
and also that $\sqrt{m}|n-m| < |n^{3/2}-m^{3/2}|$. Hence, we can bound the sum in \eqref{lem3sum} by
\begin{align}
\sum_{\substack{n,m \in \scrN(T) \\ \sqrt{m}|n-m|>A}} \frac{1}{m(n-m)^2} &\ll \sum_{\substack{n,m \in \scrN(T) \\ |n^{3/2}-m^{3/2}|>A }} \frac{(1+ \sqrt{\tfrac{n}{m}})^2}{(n^{3/2}-m^{3/2})^2} \\
& = \sum_{k=A}^{\lfloor T^{3/2} \rfloor} \sum_{\substack{n,m \in \scrN(T) \\ |n^{3/2}-m^{3/2}| \in [k,k+1] }} \frac{(1+ \sqrt{\tfrac{n}{m}})^2}{(n^{3/2}-m^{3/2})^2} \nonumber.
\end{align}
Now, when $m>n$ we can immediately conclude
\begin{align}
\sum_{k=A}^{\lfloor T^{3/2} \rfloor} \sum_{\substack{n,m \in \scrN(T) \\ |n^{3/2}-m^{3/2}| \in [k,k+1] }} \frac{(1+ \sqrt{\tfrac{n}{m}})^2}{(n^{3/2}-m^{3/2})^2} & < 4 \sum_{k=A}^{\lfloor T^{3/2} \rfloor} \frac{1}{k^2} |\{m,n \in \scrN(T) :  (n^{3/2}-m^{3/2}) \in [k,k+1] \}|  \nonumber\\
&\leq 4 \sum_{k=A}^{\lfloor T^{3/2} \rfloor} \frac{1}{k^2} \sum_{m\leq T^{3/2}} M(m)(M(m+k)+M(m+k+1)). \end{align}
By Cauchy-Schwarz we may bound this above by
\begin{align}
T^{3/2} \sum_{k=A}^{\lfloor T^{3/2} \rfloor} \frac{1}{k^2} \ll \frac{T^{3/2}}{A}.  \nonumber
\end{align}
When $m<n$, we see that
\begin{align}
n^{3/2} - m^{3/2} \in [k,k+1] \implies \left(\frac{n}{m}\right)^{3/2} \leq 1 + \frac{k+1}{m^{3/2}}.
\end{align}
We know that $m$ is bounded away from zero, say $m>C$, then we must have 
\begin{align}
\left(\frac{n}{m}\right)^{1/2} \leq C^{-1/2} (k+1+C^{3/2})^{1/3}.
\end{align}
Repeating the previous argument in this regime yields
\begin{align}
\sum_{k=A}^{\lfloor T^{3/2} \rfloor} \sum_{\substack{n,m \in \scrN(T) \\ |n^{3/2}-m^{3/2}| \in [k,k+1] }} \frac{(1+ \sqrt{\tfrac{n}{m}})^2}{(n^{3/2}-m^{3/2})^2} & \ll \sum_{k=A}^{\lfloor T^{3/2} \rfloor} \frac{1}{k^{4/3}} |\{m,n \in \scrN(T) :  (n^{3/2}-m^{3/2}) \in [k,k+1] \}|  \nonumber\\
&\leq \sum_{k=A}^{\lfloor T^{3/2} \rfloor} \frac{1}{k^{4/3}} \sum_{m\leq T^{3/2}} M(m)(M(m+k)+M(m+k+1)) \\
& \ll  T^{3/2} \sum_{k=A}^{\lfloor T^{3/2} \rfloor} \frac{1}{k^{4/3}} \ll \frac{T^{3/2}}{A^{1/3}},
\end{align}
where the final line follows from Cauchy-Schwarz as before.
\end{proof}
We are now ready to prove the second main theorem.
\begin{proof}[Proof of Theorem \ref{theorem2}] Define $\scrN'$ as follows, first remove from $\scrN$ all points $m$ whose nearest left neighbour is further than $G/\sqrt{m}$, by Lemma \ref{lem1} we are left with a subsequence of density at least $1-1/G$. Now choose $D$ and fix $E$ large enough such that 
\begin{align}
|\{ m \in \scrN(T) : |\scrN(T) \cap [m- \tfrac{D}{\sqrt{m}},m+ \tfrac{D}{\sqrt{m}}] > E+1 \}| \leq \frac{T^{3/2}}{G}
\end{align}
which is possible by Lemma \ref{lem2}. Removing these points leaves us with a subsequence of density at least $1-2/G$. Finally, by Lemma \ref{lem3}, and Chebyshev's inequality we choose $F$ large enough such that 
\begin{align}
|\{m \in \scrN(T) : \sum_{\substack{n \in \scrN(T) \\ \sqrt{m}|n-m| > D}} \frac{1}{(n-m)^2} > F \, m \} | \leq \frac{T^{3/2}}{G}.
\end{align}
Removing these points leaves us with a subsequence of density at least $1-3/G$. Thus if we consider pure momentum observables and for $m\in\scrN'$ denote by $\mu_m$ the delta measure on the point corresponding to the direction $\xi+ k$ with $|\xi+ k|^2 = m$, we see that the unnormalised measure associated to $G_{\lambda_m}$ is
\begin{align}
\sum_{n \in \scrN} \frac{\mu_n}{(n-\lambda_m)^2} = \frac{\mu_m}{(m-\lambda_m)^2} + \sum_{\substack{n\in\scrN \\ 0 < |n-m| < \tfrac{D}{\sqrt{m}}}} \frac{\mu_n}{(n-\lambda_m)^2} + \sum_{\substack{n \in \scrN \\ |n-m| > \tfrac{D}{\sqrt{m}}}} \frac{\mu_n}{(n-\lambda_m)^2}.
\end{align}
We know that the mass of the first term is $\gg m/G^2$, the mass of the second sum has at most $E$ terms, and the mass of the third is bounded above by $F \, m$. Thus the normalised measure will have a positive proportion of its mass on a finite number of points. The theorem then follows from compactness of $S^*\TT^3$ and by setting $\epsilon = 3/G$ and defining $M_\epsilon$ by $j \in M_\epsilon \iff  n_j \in \scrN'$ where $n_j$ is the $j^{th}$ ordered unperturbed  eigenvalue.
\end{proof}

\appendix
\section{}
\begin{prop}
Let $S(R) = \# \{ |\xi+k| < R \mid \xi \in \ZZ^3\} $ denote the number of shifted lattice points inside a ball of radius $R$. Then we have that
\begin{align}
S(R) = \frac{4}{3} \pi R^3 + O(R^{3/2 +\epsilon}).
\end{align}
\end{prop}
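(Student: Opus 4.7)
The strategy is the classical smoothed Poisson summation argument for lattice point counts in a ball. Its attraction here is that it uses no Diophantine information about the shift $k$, and therefore yields an error term that is uniform over all $k\in\RR^3$.

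First I would fix a nonnegative radial bump $\phi \in C_c^\infty(\RR^3)$ with $\int \phi = 1$ supported in the unit ball, set $\phi_\delta(x) = \delta^{-3}\phi(x/\delta)$ for a parameter $\delta \in (0,1)$ to be optimised later, and define $\chi^\pm_\delta := \chi_{B_{R\pm\delta}}*\phi_\delta$. These are smooth, satisfy the sandwich $\chi^-_\delta \leq \chi_{B_R} \leq \chi^+_\delta$, and have integrals $\int_{\RR^3} \chi^\pm_\delta = \tfrac{4}{3}\pi(R\pm\delta)^3 = \tfrac{4}{3}\pi R^3 + O(R^2\delta)$. Since $S(R) = \sum_{\xi\in\ZZ^3}\chi_{B_R}(\xi+k)$ is pinched between $\sum_{\xi\in\ZZ^3}\chi^\pm_\delta(\xi+k)$, Poisson summation gives
\[ \sum_{\xi\in\ZZ^3}\chi^\pm_\delta(\xi+k) = \sum_{m\in\ZZ^3}\widehat{\chi^\pm_\delta}(m)\,\e^{2\pi\i\langle m,k\rangle}. \]

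The $m=0$ term contributes the main term $\tfrac{4}{3}\pi R^3 + O(R^2\delta)$. For $m\neq 0$ I would invoke the explicit Bessel formula $\widehat{\chi_{B_r}}(m) = r^{3/2}|m|^{-3/2}J_{3/2}(2\pi r|m|)$, which combined with the asymptotic $J_{3/2}(t) = O(t^{-1/2})$ gives the key dimension-three bound $|\widehat{\chi_{B_r}}(m)| \ll r|m|^{-2}$. Factoring $\widehat{\chi^\pm_\delta}(m) = \widehat{\chi_{B_{R\pm\delta}}}(m)\,\hat\phi(\delta m)$ with $\hat\phi$ Schwartz then yields
\[ \sum_{m\neq 0}|\widehat{\chi^\pm_\delta}(m)| \ll R \sum_{m\neq 0}\frac{1}{|m|^2(1+\delta|m|)^N} \ll \frac{R}{\delta}, \]
after splitting at $|m|\asymp 1/\delta$ and using that $\sum_{0<|m|\leq N}|m|^{-2} \asymp N$ in three dimensions.

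Putting these two contributions together gives $S(R) = \tfrac{4}{3}\pi R^3 + O(R^2\delta + R/\delta)$; the choice $\delta = R^{-1/2}$ balances the two errors and yields $O(R^{3/2})$, which is in fact slightly stronger than the claimed $O(R^{3/2+\epsilon})$. The method is entirely classical and I expect no serious obstacle; the one step that benefits from care is the Bessel decay $|\widehat{\chi_{B_r}}(m)| \ll r|m|^{-2}$, and the point to emphasise is that this pointwise estimate is uniform in $k$, because the triangle inequality applied to the Poisson sum discards the oscillation of $\e^{2\pi\i\langle m,k\rangle}$.
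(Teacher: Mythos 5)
Your proposal is correct and follows essentially the same route as the paper's proof: sandwich the sharp count between smoothed counts $\chi_{B_{R\pm\delta}}*\phi_\delta$, apply Poisson summation, use the explicit Fourier transform of the ball (the paper writes the $\sin$/$\cos$ formula rather than $J_{3/2}$, but it is the same bound $\ll R|m|^{-2}$), and optimise $\delta = R^{-1/2}$. The only difference is cosmetic: by splitting the dual sum at $|m|\asymp \delta^{-1}$ and using the full Schwartz decay of $\hat\phi$ you get $O(R^{3/2})$, whereas the paper's cruder bound $|\hat\psi_\delta(\xi)|\ll(\delta|\xi|)^{-(1+\epsilon)}$ costs the harmless extra $R^{\epsilon}$.
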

\begin{proof}
We bound the quantity $S(R)$ above and below by sums over the indicator function of a shifted ball convolved with some smooth bump function with smoothing parameter $\delta$. We can then employ Poisson summation and tune $\delta$ in such a way that the error terms vanish.  Let $B_k(R)$ denote the ball of radius $R$ centred at $k$, and write $\psi_\delta(x) = \delta^{-3} \psi(x/\delta)$ where $\psi$ is some smooth function with compact support in $B_0(1)$ normalised such that $\hat{\psi}(0) = 1$. Define $S_\delta(R)$ to be the smoothed sum
\begin{align}
S_\delta(R) = \sum_{x \in \ZZ^3} \chi_{B_k(R)}* \psi_\delta(x).
\end{align}
Note that we have $S(R-\delta) \leq S_\delta(R) \leq S(R+\delta)$. By Poisson summation we see
\begin{align}
\sum_{x \in \ZZ^3} \chi_{B_k(R)} * \psi_\delta(x) = \sum_{\xi \in \ZZ^3} \hat{\chi}_{B_k(R)}(\xi) \hat{\psi}_\delta(\xi).
\end{align}
Computing the term $\xi = 0$ yields
\begin{align}
\int_{\RR^3} \chi_{B_k(R)}(x) \, dx &= \frac{4}{3} \pi R^3.
\end{align}
For $\xi \neq 0$, the Fourier coefficients $\hat{\chi}_{B_k(R)}(\xi)$ are given by
\begin{align}
\hat{\chi}_{B_k(R)}(\xi) = \e^{-2 \pi \i \langle k, \xi \rangle} \frac{1}{2 \pi^2 |\xi|^3} \left( \sin (2\pi R |\xi|) - 2 \pi R |\xi| \cos( 2 \pi R |\xi|) \right).
\end{align}
We also have that
\begin{align}
\int_{\RR^3} \delta^{-3} \psi(x/\delta) \e^{-2 \pi \i \langle x, \xi \rangle}  \, dx
&= \int_{\RR^3} \delta^{-3} \psi(x/\delta) (4 \pi^2 |\xi|^2)^{-1} (-\Delta) \e^{-2\pi \i \langle x, \xi \rangle} \, dx \nonumber \\
& = (4 \pi^2 |\xi|^2)^{-1}  \int_{\RR^3} \delta^{-3}  \e^{-2\pi \i \langle x, \xi \rangle} (-\Delta) \psi(x/\delta) \, dx \\
&= (4 \pi^2 |\xi|^2 \delta^2)^{-1}  \int_{\RR^3} \e^{-2\pi \i \delta \langle y, \xi \rangle} (-\Delta) \psi(y) \, dy. \nonumber
\end{align}
We thus have that $|\hat{\psi}_\delta(\xi)| \ll \min \{1, (\delta |\xi|)^{-2} \} \leq (\delta |\xi|)^{-(1+\epsilon)}$. Plugging these asymptotics in to the sum gives
\begin{align}
\sum_{x \in \ZZ^3 \backslash \{0\}} \chi_{B_k(R)}* \psi_\delta(x)
& \ll R \delta^{-(1+\epsilon)} \sum_{\xi \in \ZZ^3 \backslash \{0\}} |\xi|^{- (3+\epsilon)}\\
& \ll R \delta^{-(1+\epsilon)}. \nonumber
\end{align}
We thus have that
\begin{align}
S(R) \leq S_\delta(R+\delta) &= \frac{4}{3} \pi (R+\delta)^3 + O(R \delta^{-(1+\epsilon)}) \\
&= \frac{4}{3} \pi R^3 + O(R^2\delta + R \delta^{-(1+\epsilon)}), \nonumber
\end{align}
and similarly that
\begin{align}
S(R) \geq S_\delta(R-\delta) = \frac{4}{3} \pi R^3 + O(R^2\delta + R \delta^{-(1+\epsilon)}).
\end{align}
Setting $\delta = R^{-1/2}$ yields the result.
\end{proof}
\section*{Acknowledgements}
I would like to thank the Isaac Newton Institute for Mathematical Sciences, Cambridge, for support and hospitality during the programme `Periodic and Ergodic Spectral Problems' where work on this paper was undertaken, specifically the useful comments of Nadav Yesha, Henrik Uebersch\"{a}r, Zeev Rudnick and P\"{a}r Kurlberg. I would also like to thank Jens Marklof for useful advice throughout and Ilya Vinogradov for suggestions concerning the Appendix.
\bibliographystyle{plain}
\bibliography{QP}

\end{document}